\newtheorem{theorem}{Theorem}
\newtheorem{proposition}{Proposition}
\theoremstyle{definition}
\newcommand{\emdash}{\,---\,}
\newcommand{\feds}{\textsc{feds}}
\DeclareMathOperator{\E}{\mathbb{E}}
\DeclarePairedDelimiter{\floor}{\lfloor}{\rfloor}
\newcommand{\leaves}{\textsc{leaves}}
\newcommand{\children}{\textsc{children}}
\newcommand{\descendants}{\textsc{descendants}}
\DeclarePairedDelimiter{\ceil}{\lceil}{\rceil}
\DeclarePairedDelimiter{\set}{\{}{\}}
\newcommand{\A}{\mathcal{A}}
\renewcommand{\vec}{\boldsymbol}
\newcommand{\sel}{{\textsc{sel}}}
\newcommand{\uns}{{\textsc{uns}}}
\newcommand{\topics}{\mathcal{T}}
\title{Federated Assemblies}
\author{%
  Daniel Halpern \\
  Harvard University\\
  \And
  Ariel D. Procaccia\\
  Harvard University\\
  \AND
  Ehud Shapiro\\
  Weizmann Institute of Science\\and London School of Economics\\
  \And
  Nimrod Talmon\\
  Ben-Gurion University\\
}
\begin{document}

\maketitle

\begin{abstract}
A \emph{citizens' assembly} is a group of people who are randomly selected to represent a larger population in a deliberation. While this approach has successfully strengthened democracy, it has certain limitations that suggest the need for assemblies to form and associate more organically. In response, we propose \emph{federated assemblies}, where assemblies are interconnected, and each parent assembly is selected from members of its child assemblies. The main technical challenge is to develop random selection algorithms that meet new representation constraints inherent in this hierarchical structure. We design and analyze several algorithms that provide different representation guarantees under various assumptions on the structure of the underlying graph. 
\end{abstract}

\section{Introduction}

\emph{Citizens' assemblies} are a popular mechanism for democratic decision making~\cite{Stone11,Rey16,Fish18,Land20}. In the last decade, this paradigm has vastly grown in recognition and influence. In Europe, for example, governments have sponsored citizens' assemblies to inform national policy on constitutional questions (Ireland), climate change (France), and even nutrition (Germany). Technology companies like Meta~\cite{Clegg23} are also piloting (enormous) citizens' assemblies as a way of obtaining democratic inputs for AI governance and alignment.

While different assemblies may take somewhat different approaches, they all share two distinctive features. First, members of a citizen's assembly are \emph{randomly selected} among volunteers. Second, members of the assembly engage in a long and substantial \emph{deliberation} before reaching any conclusions.

The former feature is of great technical interest, as it is challenging to design a good random selection process. The goal is to achieve \emph{descriptive representation}, in the sense that the assembly should reflect the composition of the population along multiple dimensions like gender, age, ethnicity and level of education; this is seen as a source of legitimacy for citizens' assemblies. However, since the pool of volunteers is typically skewed due to self-selection bias, uniform random selection will not yield descriptive representation and more sophisticated algorithmic solutions are required. Such algorithms, which are designed to achieve descriptive representation while optimizing fairness to volunteers, have been broadly deployed~\cite{Pro22}. 

\paragraph{Our proposal: Federated assemblies.}
To our knowledge, the hundreds of citizens' assemblies convened around the world have all been independent of each other: in collaboration with practitioners, different countries, regions and municipalities have organized their own assemblies from the ground up.

By contrast, we propose a novel form of citizens' assemblies: \emph{federated assemblies}. The most basic building block of a federated assembly is two assemblies (say, each representing the residents of a city) that decide to federate, forming a new parent assembly (which represents the residents of both cities and discusses policy questions of mutual interest); crucially, \emph{the members of the parent assembly are selected from the child assemblies}. More generally, a parent assembly can have more than two children, a child assembly can have more than one parent, and the parent assemblies themselves can federate. Overall, a federated assembly is represented by a directed acyclic graph, where nodes correspond to assemblies and an edge from $x$ to $y$ means that $x$ federated with other assemblies to form the parent assembly $y$. 
The lowest level (non-federated) assemblies are \emph{leaf assemblies}, which allow people to directly sign up as constituents. Even leaf assemblies need not represent only geographical entities, they can also correspond to issues (such as climate change) or identity groups (such as ethnic groups). Furthermore, people can sign up for multiple leaf assemblies\emdash any that intersect their multi-faceted interests.

In our view, federated assemblies have several advantages over the current practice of citizens' assemblies. First, the process of forming a new assembly does not have to start from the very beginning, as its members are selected from child assemblies; therefore, the lengthy and costly process of recruiting volunteers can be avoided altogether and the bar for forming an assembly is significantly lowered. Second, in standard citizens' assemblies, the determination of which features to stratify over, and which values to assign to these features\emdash which is made by the organizers\emdash is sometimes controversial and gives rise to manipulation opportunities. By contrast, in federated assemblies, these ``features'' \emdash which are induced 
by the structure of the graph \emdash are self-determined. Third, in the spirit of \emph{associative democracy}\emdash ``a model of democracy where power is highly decentralized and responsibility for civic well-being resides with like-minded civic associations''~\cite{JM10}\emdash federated assemblies allow citizens to exercise power by forming organizations that are immediately integrated into a broader framework of governance. We believe that federated assemblies may be especially pertinent in the context of a \emph{global citizens' assembly}\emdash the holy grail of practitioners of deliberative democracy\emdash as such an assembly could form organically as a federation of assemblies representing different countries, regions, and global issues.

\paragraph{Technical challenge and our results.} 

Our proposal is undoubtedly radical and we acknowledge that the devil is in the details; we discuss some limitations in Section~\ref{sec:disc}. Our goal in this paper is to address a key, technically challenging question that arises as we consider the implementation of federated assemblies: how should they be selected?

In the context of federated assemblies, we think of an assembly as satisfying descriptive representation if it reflects both its child assemblies and its constituents. Specifically, we wish to design a random selection process, where the members of each assembly are selected from its child assemblies, so that the following constraints are satisfied:
\begin{itemize}
	\item \emph{Individual Representation:} Let the assembly's \emph{population} be the union of all (possibly overlapping) populations of its descendant leaf assemblies. Each member of this population should have an equal probability of being represented on the assembly. This constraint can be interpreted as realizing an \emph{equality of power} ideal.
     \item \emph{Ex ante representation of child assemblies:} The expected number of seats allocated to each child assembly should be proportional to the child assembly's population.\footnote{When calculating a child's proportional share, if some populations overlap, we first split the weight of members in the intersection equally across their populations, e.g., if a member is in three child populations, they only contribute 1/3 to each.}
	\item \emph{Ex post representation of child assemblies:} The number of seats allocated to each child assembly should be proportional to the child assembly's population, rounded down, \emph{ex post}. This ex post guarantee prevents situations where an unlucky draw leads to significant under-representation; it mirrors ex post quotas imposed on different features in the selection of standard citizens' assemblies.  
\end{itemize}
Our results are primarily positive, showing that achieving various properties together is indeed tractable.

We begin by considering only the first two properties, individual representation and ex ante representation of child assemblies (\Cref{sec:exante}). We design a simple algorithm (\Cref{alg:priority-order}) that is able to achieve both of these properties under some minor regularity conditions.

Next, we throw ex post child representation into the mix (\Cref{sec:expost}). This makes the problem much more challenging, so we focus on instances with additional structure. We begin with a very natural class, which we call \emph{laminar instances}, where the assembly graph is a tree, and members are signed up for only one leaf assembly. This captures instances where assemblies represent hierarchical regions, e.g., city-level which feed into state-level which feed into national-level. For such instances, we give an algorithm (\Cref{alg:tree}) that achieves all of our desired properties. Next, we generalize laminar instances to a larger class, \emph{semi-laminar}. These are rooted in a laminar instance, with multiple assemblies at each node that federate together, allowing constituents to, for example, organize both geographically and according to shared interests. It turns out that even this level of generality already adds quite a bit of complexity to the problem. We nonetheless devise a surprisingly intricate algorithm (\Cref{alg:region-topic}), that again, under mild regularity conditions, is able to achieve individual representation, ex ante representation of child assemblies, and approximate ex post representation of child assemblies up to an additive error of one.

Finally, in \Cref{sec:experiments}, we implement an algorithm based on column generation for convex programming, which, given any instance of our problem, computes a distribution satisfying the three properties. In addition to measuring the running time of the algorithm, our empirical results suggest that all of our properties can be achieved simultaneously in the general case, at least in practice.

\paragraph{Related work.}
There is a growing body of work on algorithms for randomly selecting citizens' assemblies, starting with the paper by \citet{FGGH+21}; there is significant interest in this topic in AI conferences, especially NeurIPS~\cite{FGGP20,FKP21,EKMP+22,FLPW24}. The key challenge these papers address arises because of quotas imposed on multiple, overlapping features. By contrast, our community representation constraint amounts to stratified sampling with respect to a partition of the population, which is simple in the case of standard citizens' assemblies~\cite{BGP19}. The difficulty of our problem stems from its graph structure and the need to also represent child assemblies, leading to technical questions that are quite different from prior work. We do note that the mathematical programming approach used by \citet{FGGH+21} to implement their (widely used) algorithm also works well in our case. 

Conceptually, federated assemblies are somewhat related to \emph{pyramidal democracy}~\cite{Piv09}. In this scheme, citizens self-organize into small groups at the bottom of the pyramid, with each group nominating a delegate. In the next level, those delegates form groups, each again nominates a delegate, and so on. Our proposal differs in several significant ways, but most importantly, the selection of delegates is not random in pyramidal democracy\emdash rather, it is up to each group to decide how to select its delegate\emdash and there are no representation requirements; by contrast, the whole point of our work is to design a random selection process that satisfies representation constraints.

Our technical contribution fits more broadly within the literature on dependent rounding~\cite{gandhi2006dependent}. Arguably the closest in flavor is randomized rounding for flows~\cite{raghavan1987randomized}, which has constraints similar to our inheritance ones. Furthermore, rounding the number of seats given to each child assembly (a step in several of our algorithms) is reminiscent of randomized apportionment~\cite{balinski2010fair,pukelsheim2017proportional}. However, our inheritance and ex post child constraints do not fit neatly within existing frameworks and make it challenging to use off-the-shelf techniques directly. Instead, many of our results repeatedly invoke existing schemes, such as variations of the Birkhoff Von-Neumonn Theorem~\cite{Birk46,budish2013designing}, in nontrivial ways. We discuss these techniques in more detail when we use them.

\section{Model}

Let $N$ be a finite set of people and $G$ be a directed acyclic graph. Abusing notation slightly, we write $v \in G$ if $v$ is a node in $G$.

For a node $v \in G$, let
$\children(v) = \{v' :  v \rightarrow v'$ in $G\}$ be the set of nodes $v'$ with a directed edge from $v$ to $v'$. Let $\leaves(G) = \set{v:\ |\children(v)| = 0}$ be the leaf nodes of $G$. Each person $i \in N$ will be signed up for a nonempty set $L_i \subseteq \leaves(G)$ of leaf nodes. For a set $L \subseteq \leaves(G)$, we will also use $C^L = \set{i:\ L_i = L}$ for the set of people signed up for exactly the leaf nodes $L$. We refer to each $C^L$ as an \emph{equivalence class}, which collectively form a partition of $N$. For a leaf node $\ell \in G$, its \emph{population} $N_\ell = \set{i:\ \ell \in L_i}$ is the set of all people signed up for it.

Let $\feds(G) = \set{v:\ |\children(v)| > 0}$ be the internal nodes of $G$, which we refer to as \emph{federations}. For a federation $f \in \feds(G)$, let $\descendants(f) \subseteq \leaves(G)$ be the set of all leaf nodes reachable from $f$ in $G$. For a federation $f \in \feds(G)$, its population  is defined as $N_f := \bigcup_{\ell \in \descendants(f)} N_\ell$. Note that this can equivalently be defined in terms of equivalence classes as $N_f = \bigcup_{L: L \cap \descendants(f) \ne \emptyset} C^L$.

An \emph{instance} is a tuple $\mathcal{I} = \langle G, (N_v)_{v \in G} \rangle$ of the graph and the membership relationships.
Given an instance $\mathcal{I}$ and a \emph{target assembly size} $n$, our goal is to choose an \emph{assembly} $A_v \subseteq N_v$ of size $n$ for each node $v$. We will assume, in general, that each $|N_v| \ge n$ so that this is always possible. Furthermore, we require that each federation's assembly be drawn from its child assemblies, i.e., for $f \in \feds(G)$, $A_f \subseteq \bigcup_{c \in \children(f)} A_c$. We call this the \emph{inheritance} property. A vector $\vec{A} = (A_v)_{v \in G}$  satisfying these requirements an \emph{assembly assignment} (or simply an assignment). 
Our algorithm for selecting assembly assignments will be random, and thus, their outputs will be distributions over assembly assignments, $\vec{\mathcal{A}}$. We will call such a distribution a \emph{randomized assembly assignment} and use $\mathcal{A}_v$ to refer to the marginal distribution for the assembly at node $v$.

We would like our randomized assembly assignments to satisfy various properties. Some are ex post and should hold for all assignments in the support. Others will be ex ante and are simply properties of the distribution that hold in expectation.

\textbf{Desired Properties.}
Arguably, the most important requirement is \emph{individual representation}. A randomized assignment $\vec{\mathcal{A}}$ satisfies \emph{individual representation} if for each node $v$ and $i \in N_v$, $\Pr[i \in \mathcal{A}_v] = n/|N_f|$, that is, each person has an equal chance of being selected to the assembly. 

The other flavor of requirements we have on solutions are with respect to child assemblies. For a federation $f \in \feds(G)$ and child $c \in \children(f)$, we think of $|A_f \cap A_c|$ as the number of seats child $c$ is allocated, and we would like this allocation to be at least as large as $c$ ``deserves.'' The question, however, is how to set these bounds. If the child populations $N_{c}$ of a federation$f$ are all pairwise disjoint, then a natural choice is that each $c$ should be allocated an $|N_c|/|N_f|$ fraction of the $n$ seats in $A_f$. For non-disjoint child populations, we generalize this by splitting a person's weight equally among all child populations they are a part of. More formally, for a federation $f$ and member $i \in N_f$,  define the \emph{multiplicity of $i$ at $f$} to be $m(i,f) = |\{c \in \children(f) : i \in N_c\}|$, the number of child nodes they are a member of.
The \emph{weighted population size} $w_{c,f}$ of a child federation $c\in \children(f)$ is defined by  $w_{c,f} = \sum_{i\in N_c} \frac{1}{m(i,f)}$. 
Note that the definition implies that $|N_f| = \sum_{c\in \children(f)} w_{c,f}$.
Hence, we would say that node $c$ \emph{deserves} a $ w_{c, f}/|N_f|$ fraction of the seats in $A_f$.
Define $q_{c, f} := w_{c, f}/|N_f|$ to be this fraction.

However, $n \cdot q_{c, f}$ need not be an integer. Hence, we consider two notions of fair child representation. First, we say that a randomized assignment $\vec{\mathcal{A}}$ satisfies \emph{ex ante child representation} if, for each federation $f \in \feds(G)$ and child $c \in \children(f)$, $\E[|\mathcal{A}_f \cap \mathcal{A}_c|] \ge n \cdot q_{c, f}$. Similarly, we will say that a randomized assignment satisfies \emph{ex post child representation} if for each assignment $\vec{A}$ in the support and each federation $f \in \feds(G)$ and child $c \in \children(f)$, $|A_f \cap A_c| \ge \floor{n \cdot q_{c, f}}$. In other words, even if it is impossible to guarantee exact child representation, we can ensure that all children get at least the number of seats they deserve rounded down.

\section{Ex Ante Child Representation}\label{sec:exante}
We begin our study with an algorithm that samples from a distribution satisfying inheritance, individual representation, and ex ante child representation under mild conditions, proving that these three properties are compatible. To gain intuition, we first informally present a simpler version of the algorithm tailored to the special case of assemblies of size $n = 1$. It works by selecting a single random order of all of $N$ uniformly at random; the representative for node $v$ is simply the highest-ranked member of $N_v$. Note that this allows for strikingly simple arguments for the various properties. Indeed:
\begin{itemize}
    \item \emph{Inheritance}: If $A_f = \set{i}$ for a federation $f$ then $i \in N_c$ for some $c \in \children(f)$. Since $N_c \subseteq N_f$, $i$ will be maximal among $N_c$, and hence, $A_c = \set{i}$. 
    \item \emph{Individual Representation}: Each $i \in N_v$ is equally likely to be the maximally ranked person among $N_v$, so they are a member of $A_v$ with probability $1/|N_v|$.
    \item \emph{Ex ante child representation}: A similar argument to inheritance implies that $A_c = A_f$ with probability $|N_c| / |N_f| \ge w_{c, f}$. 
\end{itemize}

\begin{algorithm}[t]
        \caption{Selection algorithm with ex ante child representation guarantees}
        \label{alg:priority-order}
        \hspace*{\algorithmicindent} \textbf{Input} Graph $G$, populations $(N_v)_{v \in G}$, and assembly size $n$\\
        \hspace*{\algorithmicindent} \textbf{Output} Assembly assignment $(A_v)_{v \in G}$
        \begin{algorithmic}[1]
            \State Choose $n$ linear orders over $N$, $\succ^1, \ldots, \succ^n$ independently uniformly at random\;
            \State Select an additional linear order $\succ^{equiv}$ over $N$ uniformly at random.
            \For {$v \in G$}
                \State Let $i^{max}_{v, 1}, \ldots, i^{max}_{v,n}$ be the maximally ranked people in $\succ^1, \ldots, \succ^n$ when restricted to $N_v$
                \For{$L \subseteq \leaves(G)$}
                    \State $r^L_v \gets |\set{j:\ i^{max}_{j, v} \in C^L}|$, i.e., the number of $i^{max}_{j, v}$ members in equivalence class $C^L$
                    \State Let $B^L_v$ be the $r^L_v$ highest ranked members of $C^L$ in $\succ^{equiv}$
                \EndFor
                \State $A_v \gets \bigcup_L B^L_v$
            \EndFor
            \State \Return $(A_v)_{v \in G}$\;
        \end{algorithmic}
    \end{algorithm}

The challenge is to extend the foregoing algorithm to $n \ge 1$. One straightforward idea is to run the same algorithm $n$ independent times, which would produce $n$ singleton assemblies $A^1_v, \ldots, A^n_v$ for each node $v$, and subsequently set $A_v = \bigcup_j A^j_v$. At first glance, this seemingly maintains all of the properties of the $n = 1$ algorithm. However, this idea, unfortunately, does not quite compile; with some positive probability, we will have $j \ne j'$ with $A^j_v = A^{j'}_v$ so we do not end up with a size-$n$ assembly. We can, nevertheless, remedy this by replacing selected members with distinct people from their equivalence class. This fix does impose an additional requirement that each nonempty equivalence class contains at least $n$ people. However, this is a relatively mild condition since we view $n$ as a reasonably small constant and populations as potentially very large. Later, we discuss how even this mild assumption can be relaxed while only slightly degrading the guarantees.

\begin{theorem}\label{thm:priority}
    Assume that each nonempty equivalence class $C^L$ satisfies $|C^L| \ge n$. Then, \Cref{alg:priority-order} satisfies individual representation and ex ante child representation.
\end{theorem}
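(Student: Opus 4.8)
The plan is to derive everything from one structural observation together with a monotonicity property of the sets $B^L_v$ produced by the algorithm: individual representation will then follow from a one-line symmetry computation, and ex ante child representation from the monotonicity. First I would record the structural fact that for every node $v$ the population $N_v$ is a union of equivalence classes — immediate from $N_\ell=\bigcup_{L\ni\ell}C^L$ for leaves and $N_f=\bigcup_{L:\,L\cap\descendants(f)\ne\emptyset}C^L$ for federations. Hence $C^L\cap N_v\ne\emptyset$ forces $C^L\subseteq N_v$, so whenever $r^L_v>0$ we have $C^L\subseteq N_v$, and the hypothesis $|C^L|\ge n\ge r^L_v$ makes $B^L_v$ well defined with $|B^L_v|=r^L_v$ and $B^L_v\subseteq N_v$. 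Since the $C^L$ partition $N$ and $\sum_L r^L_v=n$ (each maximizer $i^{max}_{j,v}$ lies in exactly one class), this already yields $A_v\subseteq N_v$ with $|A_v|=n$, so once inheritance is verified the output is a genuine assembly assignment.

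The crux is the following monotonicity claim: for a federation $f$, a class $C^L\subseteq N_f$, and a child $c\in\children(f)$ with $C^L\subseteq N_c$, we have $r^L_f\le r^L_c$, hence $B^L_f\subseteq B^L_c$. To prove it, fix an order index $j$ with $i^{max}_{j,f}\in C^L$; then $i^{max}_{j,f}\in C^L\subseteq N_c\subseteq N_f$, so being the $\succ^j$-maximum of $N_f$ and lying in the subset $N_c$ it is also the $\succ^j$-maximum of $N_c$, i.e. $i^{max}_{j,c}=i^{max}_{j,f}\in C^L$. Thus $\{j:i^{max}_{j,f}\in C^L\}\subseteq\{j:i^{max}_{j,c}\in C^L\}$, giving $r^L_f\le r^L_c$, and since $B^L_f$ and $B^L_c$ are the $\succ^{equiv}$-tops of the \emph{same} set $C^L$, the smaller is contained in the larger. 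Applied to $i\in A_f$: write $L=L_i$ and pick any child $c$ with $i\in N_c$ (one exists since $N_f=\bigcup_c N_c$); then $i\in B^L_f\subseteq B^L_c\subseteq A_c$. This gives inheritance, and the stronger fact that $i\in A_f$ together with $i\in N_c$ implies $i\in A_c$, which I reuse below.

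For individual representation, fix $v$, $i\in N_v$, $L=L_i$ (so $C^L\subseteq N_v$). Conditioning on $\succ^1,\dots,\succ^n$ fixes $r^L_v$; since $\succ^{equiv}$ is independent and uniform, $B^L_v$ is then a uniformly random $r^L_v$-subset of $C^L$, so $\Pr[i\in B^L_v\mid\succ^1,\dots,\succ^n]=r^L_v/|C^L|$ and hence $\Pr[i\in A_v]=\E[r^L_v]/|C^L|$. Writing $r^L_v=\sum_{j=1}^n\mathbf{1}[i^{max}_{j,v}\in C^L]$ and using that a uniform $\succ^j$ makes $i^{max}_{j,v}$ uniform on $N_v$ gives $\E[r^L_v]=n|C^L|/|N_v|$, so $\Pr[i\in A_v]=n/|N_v|$. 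For ex ante child representation, fix $f\in\feds(G)$, $c\in\children(f)$: since $A_c\subseteq N_c$ and $i\in A_f$ with $i\in N_c$ implies $i\in A_c$, for $i\in N_c$ the events $\{i\in A_f\cap A_c\}$ and $\{i\in A_f\}$ coincide, so summing and using individual representation at $f$ gives $\E[|A_f\cap A_c|]=\sum_{i\in N_c}\Pr[i\in A_f]=n|N_c|/|N_f|$; since $m(i,f)\ge 1$, $|N_c|\ge\sum_{i\in N_c}1/m(i,f)=w_{c,f}$, so $\E[|A_f\cap A_c|]\ge n q_{c,f}$.

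I expect the only genuinely delicate step to be the monotonicity claim in the second paragraph — spotting that the $\succ^j$-maximizer of $N_f$, whenever it falls in a class contained in a child population $N_c$, is automatically the $\succ^j$-maximizer of $N_c$, which is exactly what forces the $B^L$ sets to nest and underlies both inheritance and the clean expectation identity for child seats. Everything else is bookkeeping with equivalence classes plus the symmetry of uniform linear orders.
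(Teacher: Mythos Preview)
Your proof is correct and follows essentially the same approach as the paper: both hinge on the monotonicity $r^L_f\le r^L_c$ (whenever $C^L\subseteq N_c$), from which inheritance and the identity $\E[|A_f\cap A_c|]=n|N_c|/|N_f|$ fall out, and both compute $\Pr[i\in A_v]$ by conditioning on $r^L_v$ and using $\E[r^L_v]=n|C^L|/|N_v|$. The only cosmetic difference is that you derive the ex ante child bound by summing $\Pr[i\in A_f]$ over $i\in N_c$ (via the stronger fact ``$i\in A_f$ and $i\in N_c\Rightarrow i\in A_c$''), whereas the paper sums $\E[r^L_f]$ over classes $C^L\subseteq N_c$; these are the same calculation in different clothing.
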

\begin{proof}
    Note that each $r_v^L \le n$ and $r_v^L > 0$ only if $C^L$ is nonempty. The requirement that each $C^L$ be of size at least $n$ ensures that line 7 is able to run, and we can always pick the $b_v^L$ highest ranked members. Without this, line 7 may fail.

    Since equivalence classes are disjoint and $\sum_L r_v^L = n$ (each $i^{max}_{v, j}$ will contribute to exactly one), we have that each $|A_v| = n$. Furthermore, if $i \in A_v$, there was another $i' \in N_v$ such that both $i$ and $i'$ are in the same equivalence class. Hence, $i \in N_v$ and the chosen assemblies are also valid. 

    We next show that each of the properties holds. For each $L \subseteq \leaves(G)$, node $v$, and $j \le n$, let $I^L_{v, j} = \mathbb{I}[i^{max}_{v, j} \in C^L]$ be the indicator variable that $i^{max}_{v, j}$ is signed up for the set of leaves $L$. As long as $i \in N_v$, we have that $\mathbb{E}[I^L_{v, j}] = |C^L| / |N_v|$ (over the randomness of the selected orders).

    We begin with individual representation. Fix a node $v$ and a person $i \in N_v$. Suppose we condition on a specific value of $r_v^L$, then (abusing notation slightly) over the randomness of $\succ^{equiv}$, we have that $\Pr[i \in \mathcal{A}_v:\ r_v^L] = r_v^L / |C^L|$ because each person $i \in C^L$ is equally likely to be in any of the $|C^L|$ positions. Next, note that $r_v^L = \sum_j I^L_{v, j}$. Hence, $\E[r_v^L] = n|C^L| / |N_v|$. Putting these together, we have that $\Pr[i \in \mathcal{A}_v] = n / |N_v|$, as needed.

    Next, we show inheritance. Fix a federation $f \in \feds(G)$ and an equivalence class $C^L$ such that $C^L \subseteq N_f$. Note that there must be some child $c \in \children(f)$ such that $C^L \subseteq N_f$. We will show that for this choice of $c$, $B^L_f \subseteq B^L_c \subseteq A_c$, ex post. As this holds for every $L$, it follows that $A_f = \bigcup_L B^L_f \subseteq \bigcup_{c \in \children(f)} A_c$, ex post. To that end, it is sufficient to show that $r^L_f \le r^L_c$, which implies that $B^L_f \subseteq B^L_c$. For this, we can simply show that for each $j$, $I^L_{v, f} \le I^L_{v, c}$, ex post. Indeed, if the maximal selected member of $\succ_j$ when restricted to $N_v$ is a member of $C^L$, then the same person will be maximal when restricted to $N_c$ because $C^L \subseteq N_c$.

    Finally, we show ex ante child representation. The key observation is that for a child $c \in \children(f)$, $|A_c \cap A_f| = \sum_{L : C^L \subseteq N_c} r^L_f$ because, as shown above, $r^L_f \le r^L_c$, so the top $r^L_f$ people from $C^L$ are contained in both $A_c$ and $A_f$. It follows that
    \[
        \E[\mathcal{A}_c \cap \mathcal{A}_f] = \sum_{L : C^L \subseteq N_c} \E[r^L_f] = \sum_{L : C^L \subseteq N_c} \frac{|C^L| \cdot n}{|N_f|} =\frac{|N^c| \cdot n}{|N_f|} \ge q_{c, f} \cdot n,
    \]
    as needed.
\end{proof}

We now discuss ways to implement a modification of \Cref{alg:priority-order} that works even when $|C^L| < n$. The key idea is that \Cref{alg:priority-order} will only fail to run if there is a node $v$ such that $r^L_v > |C^L|$. In such cases, we can simply ``reject'' and restart the algorithm. Note that ex post guarantees will still be satisfied as long as the algorithm is able to terminate. For ex ante guarantees, as long as the probability of failure is at most some value $p$, the properties only degrade as a function of $p$. Specifically, each $i \in N_v$ will be selected in $A_v$ with probability at least $n/|N_v| - p$, and for a child $c \in \children(f)$, their expected intersection will be at least $n (w_{c,f} - p)$. As long as $p \ll 1/|N_v|$, this will be a negligible loss. We show that, under mild conditions on the populations being reasonably large and equivalence classes being at least of size $3$, this is indeed the case; see \Cref{app:degraded-guarantees} for more details.

\section{Ex Post Child Representation}\label{sec:expost}

In this section, we add ex post child representation to our list of requirements, albeit at a cost to the generality of our results.  

\subsection{Laminar Instances}

We begin with a more restricted structure that captures many practical potential implementations of federated assemblies. The assumption is that $G$ is a tree and that each $i \in N$ is signed up for exactly one leaf node, i.e., $|L_i| = 1$. This captures settings such as where the assemblies represent regions that form a hierarchy, i.e., city-level assemblies, which feed into state-level assemblies, which feed into national-level assemblies, and possibly beyond. Following set-theoretic terminology, we call instances satisfying these restrictions \emph{laminar}.

\Cref{alg:tree} works by going through the federations,  allocating an integral number of seats to each child, and filling these seats directly from the child's assembly. The rounding (Line 4) can be done in a variety of ways using tools from the dependent rounding literature. \citet{brewer1983introduction} provide a number of classical statistical methods for this; canonical examples from randomized algorithm design include \emph{pipage rounding}~\cite{ageev2004pipage} and variations of the \emph{Birkhoff-von Neumann Theorem}~\cite{Birk46,budish2013designing}. 
\begin{algorithm}[t]
    \caption{Selection algorithm for laminar instances with ex post child representation guarantees}
    \label{alg:tree}
    \hspace*{\algorithmicindent} \textbf{Input} Graph $G$, populations $(N_v)_{v \in G}$, and assembly size $n$\\
        \hspace*{\algorithmicindent} \textbf{Output} Assembly assignment $(A_v)_{v \in G}$
    \begin{algorithmic}[1]
        \For{leaf $\ell \in \leaves(G)$}
            \State Choose $A_\ell \subseteq N_\ell$ uniformly at random
        \EndFor
        \For{federation $f \in \feds(G)$ in a topological sort}
            \State Round $(n \cdot q_{c, f})_{c \in \children(f)}$ to an integral vector $(s_{c, f})_{c \in \children(f)}$ such that 
            \Statex\hspace{0.44in}  each $s_{c, f} \ge \floor{w_{c, f}}$, $\mathbb{E}[s_c] = n \cdot w_{c, f}$, and $\sum_c s_{c, f} = n$
            \State Select $B_{c, f} \subseteq A_c$ with $|B_{c, f}| = s_{c, f}$ uniformly at random.
            \State Let $A_f \gets \bigcup_c B_{c, f}$. 
        \EndFor
        \State \Return $(A_v)_{v \in G}$\;
    \end{algorithmic}
\end{algorithm}

\begin{theorem}
\label{thm:tree}
    \Cref{alg:tree} satisfies individual representation, ex ante child representation, and ex post child representation on laminar instances.
\end{theorem}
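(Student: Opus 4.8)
The plan is to establish the three properties in increasing order of difficulty, leveraging the tree structure to argue by induction on depth. Throughout, the key structural fact is that in a laminar instance, the populations $\{N_c\}_{c \in \children(f)}$ of the children of any federation $f$ are pairwise disjoint (since each person belongs to exactly one leaf, and the descendant leaf-sets of distinct children of a tree node are disjoint), so $m(i,f) = 1$ for every $i \in N_f$, and hence $w_{c,f} = |N_c|$ and $q_{c,f} = |N_c|/|N_f|$. This disjointness is what makes the naive ``allocate seats, then fill them'' approach coherent: the $B_{c,f}$ are automatically disjoint, so $|A_f| = \sum_c s_{c,f} = n$, and by induction each $A_f \subseteq \bigcup_c A_c \subseteq N_f$, giving a valid assignment satisfying inheritance.

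\textbf{Ex post child representation} is immediate from the construction: for any federation $f$ and child $c$, we have $|A_f \cap A_c| = |B_{c,f}| = s_{c,f}$ since $B_{c,f} \subseteq A_c$ and the $B_{c',f}$ for $c' \ne c$ are drawn from the disjoint sets $A_{c'} \subseteq N_{c'}$; by the rounding guarantee $s_{c,f} \ge \floor{w_{c,f}} = \floor{n \cdot q_{c,f}}$, as required. (I would note that the stated rounding guarantee $s_{c,f} \ge \floor{w_{c,f}}$ should read $\floor{n \cdot q_{c,f}}$; I will treat this as the intended property, which is what any Birkhoff--von Neumann-style rounding of the vector $(n q_{c,f})_c$ with sum $n$ delivers.) \textbf{Ex ante child representation} is equally direct: $\E[|\mathcal{A}_f \cap \mathcal{A}_c|] = \E[s_{c,f}] = n \cdot q_{c,f}$, which meets the inequality with equality.

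\textbf{Individual representation} is the substantive part and I would prove it by induction on the height of a node. For a leaf $\ell$, $A_\ell$ is a uniform $n$-subset of $N_\ell$, so $\Pr[i \in \mathcal{A}_\ell] = n/|N_\ell|$. For the inductive step, fix a federation $f$ and a person $i \in N_f$; by disjointness $i \in N_c$ for a unique child $c$. Condition on the entire execution up to and including the selection of $A_c$ (in particular on the event $i \in A_c$), and on the drawn value $s_{c,f}$. Given this, $B_{c,f}$ is a uniform $s_{c,f}$-subset of the fixed set $A_c$, so $\Pr[i \in A_f \mid i \in A_c, s_{c,f}] = s_{c,f}/|A_c| = s_{c,f}/n$. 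Taking expectations, $\Pr[i \in \mathcal{A}_f] = \Pr[i \in \mathcal{A}_c] \cdot \E[s_{c,f}/n \mid i \in A_c]$. The point is that $s_{c,f}$ is independent of the conditioning event $i \in A_c$: the rounding in Line 4 uses fresh randomness depending only on the quotas $(n q_{c',f})_{c'}$, which are fixed features of the instance, not of which people landed in $A_c$. Hence $\E[s_{c,f}/n \mid i \in A_c] = q_{c,f} = |N_c|/|N_f|$, and combining with the inductive hypothesis $\Pr[i \in \mathcal{A}_c] = n/|N_c|$ yields $\Pr[i \in \mathcal{A}_f] = \frac{n}{|N_c|} \cdot \frac{|N_c|}{|N_f|} = \frac{n}{|N_f|}$, completing the induction.

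The main obstacle is purely one of bookkeeping the conditional independence carefully: one must be sure that the randomness used for the rounding step at $f$, and for the uniform sub-selection $B_{c,f}$, is independent of everything that determined $A_c$, so that the nested conditioning telescopes cleanly. Because the algorithm draws all its random objects (the leaf assemblies, each rounding, each sub-selection) independently, this goes through, but it is worth stating the independence explicitly rather than gesturing at it. No delicate estimates or case analysis are needed beyond this.
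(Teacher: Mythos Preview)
Your proposal is correct and takes essentially the same approach as the paper. The paper dispatches inheritance and both child-representation properties in one line (``follow immediately from the definition of the algorithm''), matching your argument, and for individual representation it conditions on \emph{all} the $s$-vectors at once and writes $\Pr[i\in A_{f^*}]$ as a telescoping product $\frac{n}{|N_\ell|}\prod_j \frac{s_{v_{j-1},v_j}}{n}$ along the unique leaf-to-$f^*$ path, then pushes the expectation through using independence of the roundings across federations; your one-step induction is exactly the same telescoping argument unrolled level by level, relying on the same independence observation you flagged.
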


While the proof is relegated to \Cref{app:tree}, we give some intuition by discussing the naturalness of \Cref{alg:tree}, which allows for its relatively simple analysis. Specifically, we enforce ex ante and ex post child representation by first allocating the ``correct'' number of seats to each child. We then go through the tree iteratively, selecting members from their (already determined) child assemblies. We may hope that such ideas could be generalized to non-laminar instances, first allocating seats and then iteratively selecting members from the child assemblies, perhaps not uniformly at random, to account for people being in potentially different numbers of children. However, we give an example where this is not the case \emdash relaxing either of the laminar assumptions (that $G$ is a tree or that each $|L_i| = 1$) can lead to instances where satisfying all the properties is impossible using such algorithms. Instead, it is necessary to induce some other forms of correlation or relax the iterativeness, a challenge that leads to more complicated algorithms. We discuss this more formally in \Cref{app:iterative}.

\subsection{Semi-Laminar Instances}\label{subsec:semi-laminar}


We now turn to a generalization of laminar instances with a structure we view as quite practical for real-world implementation, as it allows people to organize both geographically and according to shared interests. Conceptually, there is an underlying laminar instance with graph $R$. In addition, there is a set $\topics$ representing a set of \emph{topics}. These may be various causes that people care about, say climate change or animal rights, or region-specific policy questions, such as budget allocation. 

More specifically, the graph $G$ has $|R| \cdot (|\topics| + 1)$ nodes. $|R| \cdot |\topics|$ of these are identified by members of $R \times \topics$, i.e., there is a node $(r, t)$ for each combination of region and topic.  For each $t \in \topics$, the set of nodes $R \times \set{t}$ is connected to form a copy of $R$. In addition, there is a set of nodes denoted $(r, *)$ for each $r \in R$. Each is a federation whose children are $\set{r} \times \topics$. In other words, at each region, we have an assembly that represents all people with respect to all topics of that region. 

In this graph, the leaves are nodes in $\leaves(R) \times \topics$. People can be signed up for any number of topics, but, as the underlying instance is laminar, we assume that each is a member of nodes in one region. Hence, we assume that each $L_i \subseteq \set{r} \times \topics$ for some $r \in \leaves(R)$.

Abusing notation slightly, we will write $N_r = \set{i \mid L_i \subseteq \set{r} \times \topics}$ for all people in this region (technically $N_r = N_{(r, *)}$). Furthermore, a node $(r, t)$ will have at most two parents in $G$: it will always have $(r, *)$ and possibly another node $(r', t)$ if $r$ has a parent ($r'$) in $R$. Note that if $(r', t)$ exists, the weighted population $w_{(r, t), (r', t)} = N_{(r, t)}$, the trivial weighting, because the children of $(r', t)$ have disjoint populations. On the other hand, $w_{(r, t), (r, *)} = \sum_{i \in N_{(r, t)}} \frac{1}{|L_i|}$. For brevity, we will use $w_{r, t}$ to denote only the nontrivial weight of the node $(r, t)$. Finally, note that, for $r \in \leaves(R)$ and $T \subseteq \topics$, we can write $C^{\set{r} \times T}$ for the equivalence class of people in leaf $r$ signed up for topics $T$. Everybody must fall in exactly one of these equivalence classes.


\begin{algorithm}[t]
    \caption{Algorithm for semi-laminar instances}
    \label{alg:region-topic}
    \hspace*{\algorithmicindent}\textbf{Input} Semi-laminar instance with graph $G$, populations $(N_v)_{v \in G}$, and assembly size $n$ \\
    \hspace*{\algorithmicindent}\textbf{Output} Assembly assignment $(A_v)_{v \in G}$
        
    \begin{algorithmic}[1]
        \For {$(r, t) \in R \times \topics$}
            
            \State  $s_{r, t} \gets$ \Call{Round}{$n \cdot \frac{w_{r, t}}{|N_{r, t}|}$}  \; \label{lin:srt-rounding}
            
        \EndFor
        \For{$r \in \leaves(R)$}
            \State $(B^\sel_{r, t}, B^\uns_{r, t})_{t \in T} \gets$ \Call{SampleLeaves}{$(C^{\set{r} \times T})_{T \subseteq \topics}$, $(s_{r, t})_{t \in \topics}$, $n$} \;
        \EndFor
        \For{$r \in \feds(R)$ in a topological sort}
            \For{$t \in \topics$}
                \State $B^\sel_{r, t}, B^\sel_{r, t} \gets $ \Call{SampleFromChildren}{$s_{r, t}$, $(B^\sel_{c, t}, B^\uns_{c, t}, w_{c, t}, |N_{c, t}|)_{c \in \children(r)}$}
            \EndFor
        \EndFor
        \For{$r \in R$}
            \For{$t \in \topics$}
                \State $A_{(r, t)} \gets B^\sel_{r, t} \cup B^\uns_{r, t}$\;
            \EndFor
            \State $A_{(r, *)} \gets$ \Call{RoundAndSample}{$n$, ($B^\sel_{r, t}, w_{r, t})_{t \in \topics}$}\label{lin:aggregateround}
        \EndFor
        \State \Return $(A_v)_{v \in G}$\;
    \end{algorithmic}
\end{algorithm}

We refer to an instance taking on the above structure as a \emph{semi-laminar instance}. For such instances, we have the following algorithm shown in \Cref{alg:region-topic}, with additional helper functions formally defined in \Cref{alg:helpers} in \Cref{app:helpers}. The structure is essentially an extension of the algorithm for laminar instances. Ideally, for each topic, we could independently run \Cref{alg:tree}, and when we needed to select an assembly $(r, *)$, we could select the ``correct'' number of members from each $A_{(r, t)}$. However, this does not quite work because if people are signed up for more topics, this will lead to them ending up in $A_{(r, *)}$ more frequently. To account for this, we essentially partition each $A_{(r, t)}$ into two pieces, $B^\sel_{r, t}$ of \emph{(sel)ectable} people and $B^\uns_{(r, t)}$ of \emph{(uns)electable} people, and run a separate laminar-like algorithm for each of these. The key idea is that when selecting members of $A_{(r, t)}$ for $A_{(r, *)}$, we will only select from members in $B^\sel_{r, t}$. While each $i \in N_{(r, t)}$ will have an equal chance of being in $A_{(r, t)}$, the more topics they are signed up for, the less likely they will be in $B^\sel_{r, t}$, so that, in aggregate, this will not increase their chances of being in $A_{(r, *)}$.

It should be said that the real complexity of this algorithm is hidden in the dependent rounding schemes of the helper functions. It requires careful balance and specific constraints to ensure the probabilities are exact and not subject to strange correlations that na\"ive implementations can impart. Furthermore, we need to avoid scenarios where the same person is selected twice for $A_{(r, *)}$ from two separate topics, which would not appear to have an easy solution. All of these schemes are implemented using variations of the Birkhoff-von Neumann algorithm. \citet{budish2013designing} give a class of constraints that are always possible to guarantee when rounding; we ensure that all of our rounding procedures take this form.

Finally, note that all of these extra complexities mean that we impose some additional mild regularity conditions and achieve slightly degraded guarantees, at least for ex post child representation. The regularity conditions require that no weighted population is too close to zero or the entire population and that no child population is too dominant within its parent.

\begin{theorem}\label{thm:topic}
    Fix a semi-laminar instance and assembly size $n$. Suppose there exist $\varepsilon, \delta > 0$ such that (i) for all $r, t \in R \times \topics$, $\varepsilon \cdot |N_{r, t}| \le w_{r, t} \le (1 - \varepsilon) \cdot |N_{r, t}|$, (ii) for all federations $f \in \feds(G)$ and $c \in \children(f)$, $|N_c|/|N_f| \le 1 - \delta$, and (iii) $n \ge \frac{2}{\varepsilon \cdot \delta}$. Furthermore, suppose each $|N_v| \ge 4n$ and for all nonempty equivalence classes $C^L$, $|C^L| \ge 2$. Then, \Cref{alg:region-topic} run on this instance satisfies individual representation, ex ante child representation, and approximate ex post child representation in that $|A_f \cap A_c| \ge \floor{n \cdot q_{c, f}} - 1$ for all $f \in \feds(G)$ and $c \in \children(f)$.
\end{theorem}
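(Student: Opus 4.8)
## Proof Proposal

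\textbf{Overall structure.} The plan is to verify the three properties one at a time, peeling the algorithm apart into its ``topic-wise laminar'' part and its ``aggregation'' part. The cleanest route is to first argue that, \emph{ignoring} the $(r,*)$ nodes and the $B^\sel/B^\uns$ split, the algorithm restricted to each copy of $R$ behaves exactly like \Cref{alg:tree}, so individual representation and both child-representation properties for edges $(r,t)\to(r',t)$ follow essentially from \Cref{thm:tree} (with the caveat that the rounding in Line~\ref{lin:srt-rounding} is now global across the whole topic-copy rather than done federation-by-federation, so I must check the helper \textsc{SampleFromChildren} still delivers $\E[|A_{(r,t)}\cap A_{(r',t)}|] = n\,q$ and the floor bound ex post). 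The remaining and genuinely new work is the analysis of the $(r,*)$ nodes.

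\textbf{Individual representation.} Fix $i\in N_{(r,t)}$ for a leaf region $r$ with $\ell=(r,t)$. Two claims are needed. First, $\Pr[i\in A_{(r,t)}] = n/|N_{(r,t)}|$ — this should follow from the symmetry of \textsc{SampleLeaves} among members of the same equivalence class together with $\E[s_{r,t}] = n\,w_{r,t}/|N_{r,t}|$ and the standard telescoping up the $R$-tree as in \Cref{thm:tree}. Second, and this is the crux of the whole construction, $\Pr[i\in A_{(r,*)}] = n/|N_r|$ for $i\in N_r$. Here I would condition on the event $i\in A_{(r,t)}$ for the various $t\in L_i$ and track the probability that $i$ lands in $B^\sel_{r,t}$ rather than $B^\uns_{r,t}$; the helper functions are designed precisely so that, conditioned on being selected, $i$ is ``selectable'' with probability $w_{r,t}/(|L_i|\cdot\text{(something)})$ — I expect the bookkeeping to show $\Pr[i\in B^\sel_{r,t}] = n\,w_{r,t}/(|L_i|\,|N_{r,t}|)\cdot$ (correction), and then summing over $t\in L_i$ and using $\sum_{t\in L_i} w_{r,t}/|N_{r,t}|$-type identities together with the \textsc{RoundAndSample} step at Line~\ref{lin:aggregateround} to get exactly $n/|N_r|$. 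The regularity conditions $\varepsilon|N_{r,t}|\le w_{r,t}\le(1-\varepsilon)|N_{r,t}|$ and $|C^L|\ge 2$ are what guarantee these conditional probabilities are genuinely in $[0,1]$ so the rounding is feasible. One also must check the ``selected twice'' issue the text flags: that the marginals across topics of who goes into $A_{(r,*)}$ never force the same person in twice — this is handled by the Birkhoff--von Neumann realizability of the \textsc{RoundAndSample} constraint system, invoking \citet{budish2013designing}.

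\textbf{Ex ante and ex post child representation.} For child edges within a topic copy, $(r,t)\to(r',t)$: ex ante is $\E[|A_{(r,t)}\cap A_{(r',t)}|] = n\,q_{(r,t),(r',t)}$, which because the populations of siblings are disjoint equals $n\,|N_{(r,t)}|/|N_{(r',t)}|$, obtained from the rounding constraint $\E[s_{r,t}]=n\,w_{r,t}/|N_{r,t}|$ pushed through \textsc{SampleFromChildren} (which should select a child's members monotonically/nestedly so the intersection size is exactly the allocated seat count). Ex post $|A_{(r,t)}\cap A_{(r',t)}|\ge\lfloor n\,q\rfloor$ follows from the ``$s\ge\lfloor w\rfloor$''-style guarantee of \Call{Round}{} plus the nesting. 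For the edges $(r,t)\to(r,*)$: ex ante needs $\E[|A_{(r,*)}\cap A_{(r,t)}|]\ge n\,w_{r,t}/|N_r| = n\,q_{(r,t),(r,*)}$, which comes from \textsc{RoundAndSample} being run with expectations $n\,w_{r,t}/|N_r|$ and from the fact that $A_{(r,*)}$ draws only from $B^\sel_{r,t}\subseteq A_{(r,t)}$, so the intersection is exactly the number of seats the aggregation step gives topic $t$. The ex post bound here is the one that degrades: \textsc{RoundAndSample} can only guarantee $\ge\lfloor n\,w_{r,t}/|N_r|\rfloor$, \emph{but} it operates on the realized set $B^\sel_{r,t}$ whose size is itself a rounded random quantity that may fall one below its mean, so the best ex post guarantee is $\lfloor n\,q\rfloor - 1$ — this is exactly the additive-one slack in the statement, and condition (iii) $n\ge 2/(\varepsilon\delta)$ together with (ii) is what keeps $|B^\sel_{r,t}|$ large enough that the rounding there is feasible at all (i.e., that we never need to give topic $t$ more selectable seats than $|B^\sel_{r,t}|$ contains).

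\textbf{Main obstacle.} The hard part will be the individual-representation computation for the $(r,*)$ nodes: untangling the chain of conditional probabilities through \textsc{SampleLeaves}, the nested \textsc{SampleFromChildren} applications up the tree, and finally \textsc{RoundAndSample}, while simultaneously verifying at each stage that the Birkhoff--von Neumann-type constraint systems are feasible (nonnegativity, row/column sums, and the \citet{budish2013designing} structural condition). Everything else is a careful but essentially routine adaptation of the \Cref{thm:tree} argument; the $B^\sel/B^\uns$ split and the proof that it exactly cancels the over-representation of multi-topic members is where all the work — and all the regularity hypotheses — actually get used.
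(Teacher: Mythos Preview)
Your high-level decomposition into ``topic-wise laminar'' plus ``aggregation'' is the right intuition, but you have the location of the additive $-1$ slack exactly backwards, and that error is symptomatic of a deeper misreading of how \textsc{SampleFromChildren} allocates seats.

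In the paper's proof the $(r,*)$ nodes get \emph{exact} ex post child representation: \textsc{RoundAndSample} is called with fractional targets $n\cdot w_{r,t}/|N_r|$ and the regularity conditions guarantee $|B^\sel_{r,t}| = s_{r,t} \ge \lceil n\,w_{r,t}/|N_r|\rceil$, so the floor bound holds without loss. The $-1$ degradation occurs instead on the within-topic edges $(c,t)\to(r,t)$. The reason is that \textsc{SampleFromChildren} does \emph{not} allocate seats according to $n\cdot|N_{(c,t)}|/|N_{(r,t)}|$ as \Cref{alg:tree} would; it allocates $s_{r,t}\cdot w_{c,t}/w_{r,t}$ selectable seats and $(n-s_{r,t})\cdot(|N_{(c,t)}|-w_{c,t})/(|N_{(r,t)}|-w_{r,t})$ unselectable seats. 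This sum equals $n\cdot|N_{(c,t)}|/|N_{(r,t)}|$ only when $s_{r,t}$ hits its mean exactly; since $s_{r,t}$ is itself a rounding of $n\,w_{r,t}/|N_{(r,t)}|$, the sum $x^\sel_c+x^\uns_c$ can fall up to one below the ideal, and that is where the $-1$ comes from. Consequently your plan to say the topic copies ``behave exactly like \Cref{alg:tree}'' and inherit the exact floor bound from \Cref{thm:tree} would fail.

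A second gap: the ``selected twice'' issue is not resolved by Birkhoff--von Neumann feasibility of \textsc{RoundAndSample}. It is resolved by the structural invariant that the sets $(B^\sel_{r,t})_{t\in\topics}$ are pairwise \emph{disjoint} for every fixed $r$, which is enforced at the leaves by the disjoint sampling in \textsc{SampleLeaves} and then preserved inductively up the $R$-tree. The paper's proof is organized around a structural induction on $R$ establishing five invariants simultaneously (containment, sizes, disjointness of $B^\sel$ and $B^\uns$, pairwise disjointness of $(B^\sel_{r,t})_t$, and the exact formulas $\Pr[i\in B^\sel_{r,t}]=s_{r,t}/(|L_i|\,w_{r,t})$, $\Pr[i\in B^\uns_{r,t}]=(n-s_{r,t})(1-1/|L_i|)/(|N_{(r,t)}|-w_{r,t})$), together with feasibility of every sampling step. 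Once those formulas are in hand, individual representation at both $(r,t)$ and $(r,*)$ follows by taking expectation over $s_{r,t}$ and, for $(r,*)$, summing the mutually exclusive events $i\in B^\sel_{r,t}$ over $t\in L_i$. Your proposal gestures at these probabilities but never pins them down; writing them explicitly and carrying them through the induction is the actual engine of the argument.
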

The proof is relegated to \Cref{app:region}.

\section{Experiments}\label{sec:experiments}
While we have given efficient algorithms for finding randomized solutions satisfying various properties in special cases, we focus now on how easy it is in practice to find randomized assignments in generality. To this end, we sample several thousand instances and use a brute force algorithm to try to find a randomized assignment satisfying all of our ex post and ex ante guarantees.

\textbf{Algorithm.}
The computation of randomized assignments satisfying our guarantees is quite challenging. Our algorithm for this task uses convex optimization and integer linear programming (ILP) as subroutines. It is inspired by a related algorithm used for standalone citizens' assemblies, which is an extension of column generation to convex programming~\cite{FGGH+21}. At a high level, we convert our problem to a smoother optimization problem by defining a convex loss over randomized assignments measuring how far a distribution is from satisfying all ex ante guarantees.
The algorithm maintains a set of (deterministic) assembly assignments that all satisfy ex post guarantees. It alternates between two steps. First, it finds the distribution over just this set of assignments that minimizes the squared error. If this distribution achieves zero error (or, more accurately, some small additive error of .1\% to avoid numerical issues), then every ex ante guarantee is satisfied; in that case, we have found a solution and subsequently return it. On the other hand, if it does not, the algorithm runs an ILP to find an assignment satisfying ex post guarantees that maximizes the loss gradient the current best possible point. We iterate these two steps until a solution is found.

\textbf{Experimental Setup.}
We draw instances as follows. First, we fix a number of equivalence classes (2, 5, 10, or 20) and then sample their relative sizes from an exponential distribution. Next, we iterate through a number of federations (2, 5, 10, or 20). Each federation has a randomly selected set of already defined federations or equivalence classes as children.\footnote{This means that we allow federations to have direct members and child assemblies. However, this generality only makes the problem harder.} This method allows for arbitrary DAGs to be sampled. We then test these instances with various assembly sizes $n$ (2, 5, 10, or 20). 

For each combination of parameters, we sampled and ran 100 instances.
All optimizations were solved using Gurobi on an Amazon Web Services (AWS) instance with 128 vCPUs of a  3rd Gen AMD EPYC running at 3.6GHz equipped with 1TB of RAM. We were able to run 64 instances in parallel, giving each thread two processors. Depending on the size of the instance, computation took anywhere from under a tenth of a second to multiple hours.

\begin{figure}
    \centering
    \begin{subfigure}[b]{0.32\textwidth}
        \centering
        \includegraphics[width=\textwidth]{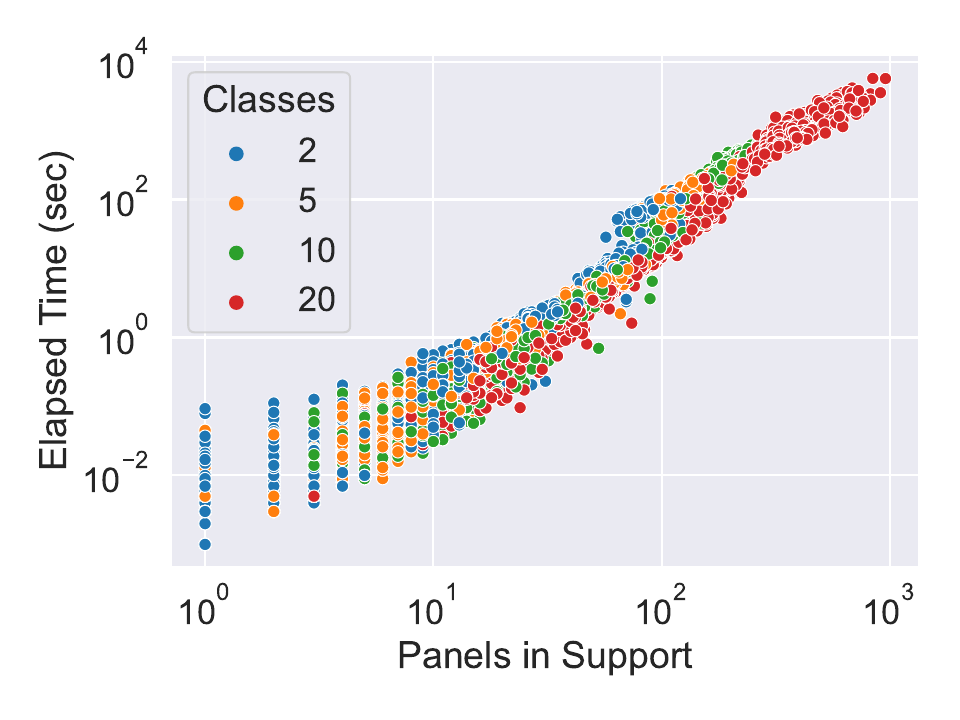}
        
        \caption{Instances colored by the number of equivalence classes.}
    \end{subfigure}%
    \hfill
     \begin{subfigure}[b]{0.32\textwidth}
        \centering
        \includegraphics[width=\textwidth]{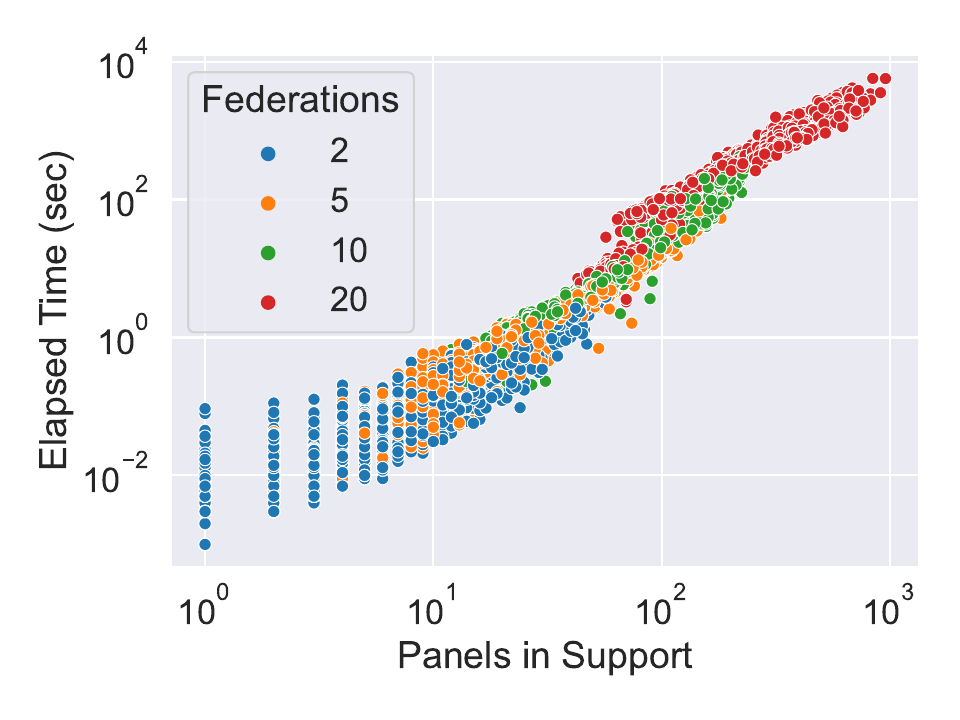}
        \caption{Instances colored by the number of federations.}
    \end{subfigure}%
    \hfill
     \begin{subfigure}[b]{0.32\textwidth}
        \centering
        \includegraphics[width=\textwidth]{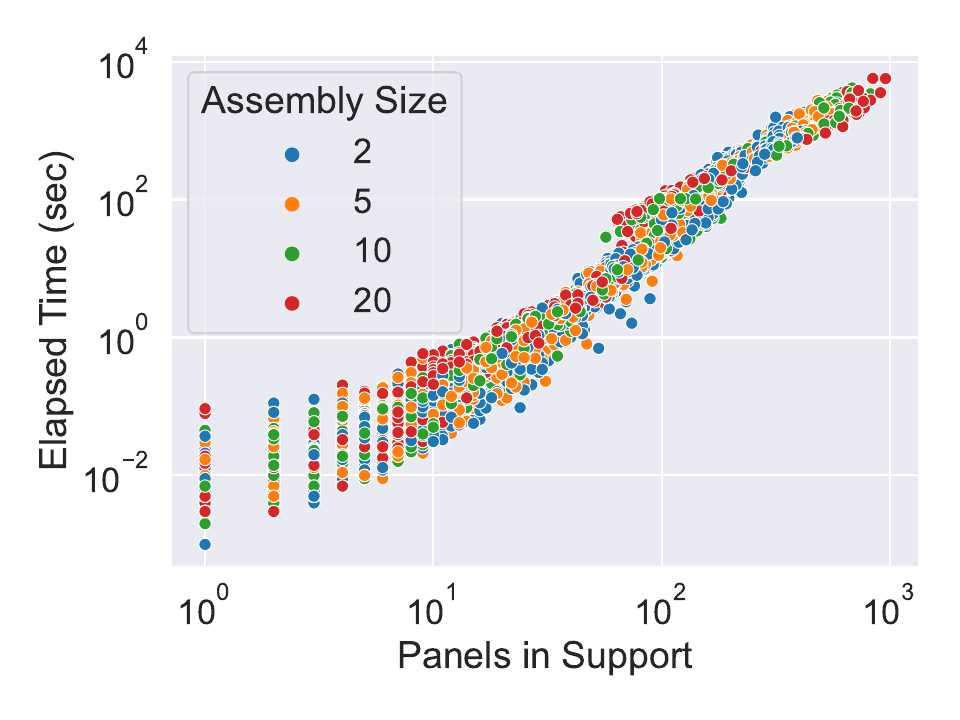}
        \caption{Instances colored by the assembly size.}
    \end{subfigure}
    \caption{Scatter plots showing the time taken and number of panels in the support for each of the instances we ran on. Sub-plots show the same plot, colored by a parameter.}
    \label{fig:plots}
\end{figure}

\textbf{Results.}
Of the 6400 instances, on all but 15, the algorithm terminated, returning an optimal solution. These 15 had only the largest number of equivalence classes and federations (20 each). With better optimization, in practice, the problem of finding a randomized assignment satisfying all of our guarantees appears to be both feasible and tractable.  

For the >99.7\% of instances that did terminate, 
we use two metrics to measure the complexity of finding the distributions: the elapsed time and the number of panels added to the support. A scatter plot showing the relationship between these parameters is given in \Cref{fig:plots}. Each subfigure shows the same points but colors them depending on a different parameter so that we may see its effect. Overall, we see that both of these complexity measures are quite similar. Furthermore, increasing the number of equivalence classes and federations strongly correlates with increased complexity. Assembly size, on the other hand, appears to be relatively unimportant.

\section{Discussion}
\label{sec:disc}

The democratic governance of large-scale digital communities is an open problem.  Key challenges include,  first, the penetration of fake and duplicate digital identities (a.k.a.~sybils), and
second, the perils of large-scale online voting, which is considered to be untenable by some leading experts~\cite{park2021going}. 
Federated assemblies can be viewed as a step in an effort to address these challenges.  
Our approach may address sybils by having the laminar core built from small local communities in which members know each other to be genuine and from communities that federate only if they trust each other to be genuine,  for example by having sufficient intersection or actual relationships among them to base this trust on.
Large-scale online voting is a nonissue as every federation, no matter how large, is governed by an assembly that engages in ``small-scale'' democracy.

Our approach also has some limitations.  First and foremost, an obvious barrier to federated assemblies is the question of who would set up such assemblies, manage the infrastructure, and provide funding? One way to address this challenge is to use the conceptual framework and architecture of grassroots systems~\cite{shapiro2023grassroots,shapiro2023gsn} and to construct the application of grassroots federated assemblies as a grassroots platform~\cite{shapiro2024grassroots}, operated on peoples' smartphones without relying on any global resources other than the network itself.

Second, we modeled the problem as a static and single-shot: we simply needed to sample a single assignment fairly. In practice, however, these assemblies are dynamic and must be periodically updated. This is not an inherent limitation, however. Indeed, one solution is to resample fresh assemblies every fixed amount of time. However, this may get more challenging if the system is more malleable with members coming and going. We may hope for more ``ex-post over time'' properties to ensure no group is consistently receiving the short end of the stick. Furthermore, we may hope to allow for local changes to occur without completely refreshing all assemblies simultaneously, say rotating people in and out one at a time, and doing so with minimal changes to assemblies on the opposite side of the graph. Modeling this well and defining useful ``over-time'' fairness properties seems to be challenging yet potentially impactful future work.

Finally, although we analytically solve well-motivated special cases, we leave open whether a randomized assignment satisfying all of our desiderata exists in the general case. In our extensive experiments, we have not found any infeasible instances, and we are therefore optimistic that existence can be guaranteed.

\subsection*{Acknowledgements}
Ehud Shapiro is the Incumbent of The Harry Weinrebe Professorial Chair of Computer Science and Biology at the Weizmann Institute and a Visiting Professor at the London School of Economics.

\bibliographystyle{plainnat}
\bibliography{abb,ultimate,bib}

\newpage

\appendix

\section{Extending \Cref{alg:priority-order} to Smaller Equivalence Classes}\label{app:degraded-guarantees}

We will analyze here the probability of \Cref{alg:priority-order} needing more than $|C^L|$ people from each equivalence class $C^L$. It is sufficient to ensure this does not occur in the leaves, as in the proof of \Cref{thm:priority} internal nodes select fewer people from $|C^L|$ than their children.

Fix an equivalence class $C^L$ and a leaf node $\ell \in L$. We can directly analyze the probability that more than $|C^L|$ people are selected from $C^L$ for this assembly. Namely, for each of the $n$ draws, the probability that it was a member from $C^L$ is $|C^L|/|N_{\ell}|$. We are doing $n$ draws of this. Hence, we wish to analyze the probability of $\Pr[\sum_{i=1}^n X_i > |C^L|]$, where each $X_i$ is drawn from an independent Bernoulli with $\Pr[X_i = 1] = |C^L|/|N_{\ell}|$. By Chernoff bound says that for all $X$ which is the sum of independent variables with $\mu = \E[X]$ and $\delta \ge 0$,
\[
    \Pr[X > (1 + \delta)\mu] < \left(\frac{e^\delta}{(1 + \delta)^{(1 + \delta)}} \right)^\mu \le \left(\frac{e}{1 + \delta} \right)^{(1 + \delta)\mu}.
\]
For our purposes, $\mu = n|C^L|/|N_\ell|$, and we wish to set $\delta$ such that $\mu(1 + \delta) = |C^L|$, so $1 + \delta = N^|ell|/ n$. This implies that the probability of failure is at most
\[
    \left(\frac{e \cdot n}{|N_\ell|}\right)^{|C^L|}.
\]
Suppose additionally that populations are not too small in the sense that $|N_\ell| \ge q |N|$ for some value $q$. Then, this is at most
\[
    \left(\frac{e \cdot n}{q \cdot |N|}\right)^{|C^L|}.
\]
Finally, to ensure this happens for no leaves or equivalence classes, we need to union bound over the at most $|G|$ leaf nodes and $|N|$ equivalence classes, leading to a total probability of failure of at most
\[
    |N| |G| \cdot \left(\frac{e \cdot n}{q|N|}\right)^{|C^\ell|} .
\]

Now, suppose each $|C^\ell|$ is of size at most $3$. Furthermore, assume that $|N| \gg n, 1/q, |G|$, which we would expect for reasonable instances. Thus, it makes sense to interpret this asymptotically as $|N|$ grows large compared to the rest of the terms. This leads to a bound of failure of $O(1/|N|^2)$, essentially negligible compared to all ex-ante guarantees (which are $\Theta(1/|N|)$).

\section{Proof of \Cref{thm:tree}}
\label{app:tree}

    Showing inheritance, ex ante child representation, and ex post child representation follow immediately from the definition of the algorithm.

    It remains to establish individual representation. Fix a person $i$, and let $\ell \in \leaves(G)$ be the leaf node they are signed up for. Note that $A_\ell$ is simply a random sample of $n$ people from $N_\ell$, so clearly $\Pr[i \in \mathcal{A}_\ell] = n/|N_\ell|$ . Next, fix a federation $f^* \in \feds(G)$ such that $i \in N_{f^*}$. Consider running the algorithm in a different order, sampling all vectors $(s_{c, f})_{c \in \children{f}}$ at the beginning before starting the algorithm, and then running according to these samples. Note that this leads to an equivalent process because each $(s_{c, f})_{c \in \children{f}}$ is sampled independently of everything else in the algorithm. Condition on a specific sample of these vectors. Let $\ell = v_0, v_1, \ldots, v_k = f$ be the path in $G$ that leads from $\ell$ to $f$. Note that we can now directly compute the probability $i \in A_{f^*}$ because the only way to do so is if $i \in A_\ell$ and $i \in B_{v_{j - 1}, v_j}$ for each $j \ge 1$. Hence, this probability is exactly
    \[
        \frac{n}{|N_\ell|} \cdot \prod_{j = 1}^k \frac{s^{v_{j - 1}, v_j}}{n}.
    \]
    To get the unconditional probability, we can simply take the expectation over all $s$ values, i.e.,
    \[
        \E\left[\frac{n}{|N_\ell|} \cdot \prod_{j = 1}^k \frac{s^{v_j}_{v_{j - 1}}}{n}\right].
    \]
    Since each $s_{c, f}$ and $s_{c', f'}$ is sampled independently for $f \ne f'$, we can push the expectation in to get equality to
    \[
        \frac{n}{|N_\ell|} \cdot \prod_{j = 1}^k \frac{ \E\left[s^{v_{j - 1}, v_j}\right]}{n} = \frac{n}{|N_\ell|} \cdot \prod_{j = 1}^k \frac{ n \cdot q_{v_{j - 1}, v_j}}{n} = \frac{n}{|N_\ell|} \prod_{j = 1}^k \frac{|N_{v_{j - 1}}|}{|N_{v_j}|} = \frac{n}{|N_{f^*}|},
    \]
    as needed. \qed

\section{Impossibility for Iterative Algorithms}\label{app:iterative}
To formalize this impossibility result, we call an algorithm \emph{topologically iterative} if it has the same structure as \Cref{alg:tree}, except line 2 and 5 are replaced with potentially different sampling schemes.

\begin{proposition}
    For all assembly sizes $n$, there exist instances both with $G$ being a tree (but people signed up for multiple leaves) or people signed up for a single leaf (but $G$ not being a tree) where no topologically iterative algorithm can simultaneously satisfy individual representation and ex ante child representation. 
\end{proposition}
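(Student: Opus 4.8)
The plan is to construct, for each $n$, two small explicit instances — one where $G$ is a tree but some person is signed up for two leaves, and one where every person is signed up for a single leaf but $G$ has a node with two parents — and show by a direct counting argument that a topologically iterative algorithm cannot satisfy both individual representation and ex ante child representation. The core obstruction is the same in both cases: a topologically iterative algorithm processes federations in topological order and, when it reaches a federation $f$, it must choose $B_{c,f} \subseteq A_c$ using only information local to $f$ and its children's already-fixed assemblies $A_c$; crucially, the marginal distribution of $A_c$ it sees is the \emph{same} regardless of which parent is asking, so it cannot ``coordinate'' a person's inclusion across two parents or across two leaves feeding into a common ancestor.

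For the non-tree instance, I would take a diamond: leaves $\ell_1, \ell_2$, two mid-level federations $f_1, f_2$ each with a shared child structure, and a root. Concretely, let there be a leaf $\ell$ whose population overlaps (through the graph, not through multi-sign-up) with the populations of two different parents $f_1$ and $f_2$, and arrange sizes so that individual representation forces $\Pr[i \in \mathcal{A}_{f_1}]$ and $\Pr[i \in \mathcal{A}_{f_2}]$ to specific values, while ex ante child representation forces $\E[|\mathcal{A}_{f_1} \cap \mathcal{A}_\ell|]$ and $\E[|\mathcal{A}_{f_2} \cap \mathcal{A}_\ell|]$ to specific values; then at the root, the number of distinct people available from $A_{f_1} \cup A_{f_2}$ is constrained by how much $A_{f_1}$ and $A_{f_2}$ can be forced to overlap or be disjoint — but since $B_{\cdot, f_1}$ and $B_{\cdot, f_2}$ are sampled independently given the children (the algorithm has no channel to correlate them), the root sees an overlap distribution it cannot control, and one checks the resulting marginals at the root are incompatible with individual representation. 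For the tree-with-multi-sign-up instance, the analogous trick is to have a single person $i$ signed up for two leaves $\ell, \ell'$ that both feed (through the tree) into a common federation $f$; $i$'s inclusion in $A_\ell$ and in $A_{\ell'}$ is governed by two independent leaf-sampling steps, so $\Pr[i \in A_\ell \cap A_{\ell'}]$ is forced to be a product, and then the seats $f$ can give to its relevant children, combined with individual representation at $f$, pin down quantities that this forced independence violates.

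The key steps, in order, are: (1) write down the two instances with all population sizes as explicit functions of $n$ (I'd aim for populations that are small multiples of $n$ so the quotas $n q_{c,f}$ land at convenient values, possibly non-integers to exploit the gap between ex ante and what ex post would force — though here only ex ante is needed); (2) record the linear constraints that individual representation imposes on the marginals $\Pr[i \in \mathcal{A}_v]$ at every node, and the constraints ex ante child representation imposes on $\E[|\mathcal{A}_f \cap \mathcal{A}_c|]$; (3) use the ``iterative'' structure to argue that, conditioned on the realized child assemblies, the selections into different parents (resp. from different leaves) are independent, hence certain joint expectations \emph{factor}; (4) combine (2) and (3) into a numerical contradiction — typically by summing a representation equality over a person or a class and comparing to what the factorization permits.

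The main obstacle I anticipate is step (3): making precise and airtight the claim that a topologically iterative algorithm \emph{cannot} correlate the selections, given that line 5's ``sampling scheme'' is left arbitrary. The subtlety is that the scheme at $f$ may depend on the full realized set $A_c$, not merely on its size, so I must design the instance so that $A_c$ carries no usable side information — e.g. by making the relevant child populations structurally symmetric so that any $A_c$ of the right size is exchangeable with any other, forcing the conditional selection probabilities for a fixed person to depend only on $|A_c|$, which is itself deterministic or has a fixed distribution. Once symmetry kills the side channel, the factorization in (3) is forced and the counting contradiction in (4) should be routine. A secondary, lesser obstacle is verifying that the constructed instances actually \emph{admit} some topologically iterative algorithm satisfying each property \emph{individually} (otherwise the impossibility is vacuous), which I'd handle with a one-line exhibition of the natural uniform scheme.
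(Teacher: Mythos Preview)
Your plan differs substantially from the paper's and is considerably more intricate than necessary. The paper uses a single federation $f$ with $2n$ leaf children $c_1,\dots,c_{2n}$: each $c_j$ has $k$ people signed up only for it, plus there is one equivalence class of $(2n-1)k$ people signed up for \emph{all} leaves. By symmetry $q_{c_j,f}=1/(2n)$, so in line~4 of Algorithm~2 the seat vector satisfies $\E[s_j]=1/2$, and Markov's inequality gives $\Pr[s_1\ge 1]\le 1/2$. For any person $i$ in $C^{\{c_1\}}$, membership in $A_f$ requires both $i\in A_{c_1}$ and $s_1\ge 1$; these events are independent because line~4 is \emph{not} replaced in the definition of ``topologically iterative,'' so the seat rounding is always independent of whatever line~2 does. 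Multiplying gives $\Pr[i\in A_f]\le \tfrac{1}{2}\cdot\tfrac{n}{2nk}=\tfrac{1}{4k}$, which falls short of the individual-representation target $n/((4n-1)k)>1/(4k)$. The non-tree variant is then obtained in one line by inserting a new leaf for each equivalence class.

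Two points of contrast. First, the paper's obstruction is \emph{not} the independence of selections across different parents or leaves (your step~(3)); it is the independence of the fixed seat-rounding step (line~4) from the child assemblies. That independence is automatic from the definition and requires no symmetry argument, so the obstacle you flag as ``main'' simply does not arise. Second, your diamond and two-leaf constructions aim to exploit forced factorization between two branches meeting at a common ancestor, whereas the paper's star construction needs only one federation and exploits the fact that a person confined to a single leaf can reach $A_f$ only when that leaf receives at least one seat. Your route might be salvageable with careful parameter tuning, but it fights a harder battle: once line~5's replacement may depend arbitrarily on the realized $A_c$, forcing factorization of joint inclusion probabilities via exchangeability is delicate, and you have not yet exhibited parameters where the numbers close. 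The paper's Markov-on-seat-count argument sidesteps all of that.
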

\begin{proof}
    We begin with an instance where $G$ is a tree, but some people are signed up for multiple leaves. Fix an assembly size $n$. There will be a single federation $f$ with $2n$ children $c_1, \ldots, c_{2n}$ which are all leaves.  We will define populations by equivalence classes. There will be $2n$ sets of people $C^{\set{c_j}}$ of equal size that each are only signed up for $c_j$, i.e., $|C^{\set{c_j}}| = k$ for some arbitrary integer $k$. In addition, there will be a set $C^{\set{c_1, \ldots, c_{2n}}}$ of people signed up for all children assemblies. This will be $2n-1$ times as large as each individual group, so $|C^{\set{c_1, \ldots, c_{2n}}}| = (2n - 1)k$. Note that, by symmetry, $q_{c_j, f} = 1/(2n)$ for each $j$.
    
    Suppose our strategy now is to (1) sample $(A_{c_1}, \ldots, A_{c_{2n}})$ (from some distribution satisfying individual representation) (2) independently, sample an integral vector $(s_1, \ldots, s_{2n})$ such that $\sum_{j} s_j = n$ and $\E[s_j] = n/(2n) = 1/2$, and then (3) choose $A_f$ by selecting $s_j$ people from $A_{c_j}$. We now claim that we cannot select $A_f$ such that it satisfies individual representation. Indeed fix an arbitrary $i \in C^{\set{c_1}}$. A necessary condition for $i \in A_f$ is that both $i \in A_{c_1}$ and $s_1 \ge 1$.
    Note that $\Pr[i \in \A_{c_1}] = \frac{n}{|C^{c_1}| + |C^{\set{c_1, \ldots, c_{2n}}}|} = \frac{1}{2k}$. Furthermore, $\Pr[s_1 \ge 1] \le 1/2$ by Markov's inequality. Since these are selected independently, the probability they both occur is at most $\frac{1}{4k}$. However, individual representation ensures that $\Pr[i \in \A_f] = \frac{n}{|C^{\set{c_1, \ldots, c_{2n}}}| + \sum_{j} |C^{\set{c_j}}|} = \frac{n}{(4n - 1)k} > \frac{1}{4k}$.

    To convert this to an instance where $G$ is not a tree, we can have an additional node for each equivalence class, and have these nodes point to the set of leaf assemblies that class was signed up for. Then, each person can sign up for this single corresponding leaf node rather than a larger set of nodes, and the same argument goes through.
\end{proof}

\section{\Cref{alg:region-topic} Helper Functions}\label{app:helpers}

Here, we formalize the helper functions used in \Cref{alg:region-topic}, presented as \Cref{alg:helpers}. All of the randomized rounding can be done using a variation of the Birkhoff Von-Neumonn Theorem. Namely, \citet{budish2013designing} give an algorithm to handle the following randomized rounding instances. Say we are given a set of values $p_1, \ldots, p_k$ and a set of \emph{constraints} represented as a family of sets $\mathcal{I}$ where each $I \subseteq \mathcal{I}$ has $I \subseteq \set{1, \ldots, k}$. Our goal is to round these values to $x_1, \ldots, x_k$ such that $\mathbb{E}[x_i] = p_i$. Furthermore, we will do this such that for all $I \in \mathcal{I}$, $\floor{\sum_{i \in I} p_i} \le \sum_{i \in I} x_i \le \ceil{\sum_{i \in I} p_i}$. \citet{budish2013designing} show that if $\mathcal{I}$ is a \emph{bihierarchy}, then this is possible and can be done with a polynomial time algorithm. $\mathcal{I}$ is said to be a bihierarchy of there exists a partition $\mathcal{I} = \mathcal{I}_1 \cup \mathcal{I}_2$ such that for all pairs $I, I' \in \mathcal{I}_j$ of either partition, either $I \subseteq I'$, $I' \subseteq I$, or $I \cap I' = \emptyset$. One can check that all randomized roundings we do in the helper functions take this form.
\begin{algorithm}[t]
\caption{Helper functions}
    \label{alg:helpers}
    \begin{algorithmic}[1]
        \Function{Round}{$x$} 
            \State \Return $\ceil{x}$ with probability $x - \floor{x}$ and $\floor{x}$ otherwise.
        \EndFunction

        \Function{SampleLeaves}{$(C^T)_{T \subseteq \topics}$, $(s_t)_{t \in \topics}$, $n$}
        \State $p^T_t \gets \frac{|C^T|/|T|}{\sum_{T': t \in T'} |C^{T'}| / |T'|} \cdot s_t$ for all $T \subseteq \topics$ with nonempty $C^T$ and $t \in T$.
        \State Round all $(p^T_t)_{t, T}$ to $(\alpha^T_t)_{t, T}$ such that $\mathbb{E}[\alpha^T_t] = p^T_t$, $\alpha^T_t \le \ceil{ p^T_t}$,$\forall t, \sum_{T: t \in T} \alpha^T_t = s_t$,
        \Statex \hspace{0.44in}   and $\forall T, \sum_{t \in T} \alpha^T_t \le \ceil{\sum_{t \in T} p^T_t}$.
        \For{$T \subseteq \topics$}
            \State Choose $(D^T_t)_{t \in T}$ of sizes $(\alpha^T_t)_{t \in T}$ randomly from $C^T$ such they are each disjoint \label{lin:sample-D}
        \EndFor
        \For{$t \in \topics$}
           \State $B^\sel_t \gets \bigcup_{T: t \in T} D^T_t$ 
        \EndFor
        \For{$t \in \topics$}
        \State $q^T_t \gets \frac{|C^T|(1 - 1/|T|)}{\sum_{T': t \in T'} |C^{T'}|(1 - 1/|T'|)} \cdot (n - s_t)$ for all $T \subseteq \topics$ with nonempty $C^T$ and $t \in T$
            \State Round $(q^T_t)_{T}$ to $(\beta^T_t)$ such that each $\beta^T_t \le \ceil{q^T_t}$, $\sum_T \beta^T_t = n - s_t$, and $\E[\beta^T_t] = q^T_t$. 
            \For{$T: t \in T$}
                \State Choose $E^T_t$ be a random sample of $\beta^T_t$ members of $C^T \setminus \set{D^T_t}$.\label{lin:sample-E}
                \EndFor
                \State $B^\uns_t \gets \bigcup_{T: t \in T} E^T_t$.
        \EndFor
        \State \Return $(B^\sel_t, B^\uns_t)_{t \in \topics}$
        \EndFunction
        
        \Function{SampleFromChildren}{$s$, $n$, $(B^\sel_{c}, B^\uns_{c}, w_{c}, |N_{c}|)_c$}
            \State Let $x^\sel_c = s \cdot \frac{w_c}{\sum_{c'} w_{c'}}$ and $x^\uns_c = (n - s) \frac{|N_c| - w_c}{\sum_{c'} |N_{c'}| - w_{c'}}$ for all $c$.
            \State Round each $x^j_c$ for $j \in \set{\sel, \uns}$ and child $c$ to $\gamma^j_c$ such that $\E[\gamma^\sel_c] =  x^\sel_c$, $\gamma^{\sel}_c \ge \floor{ x^\sel_c}$, \label{lin:massiverounding}
             \Statex\hspace{0.44in} for $\sum_{c} \gamma^{\sel}_c = s$, $\sum_c \gamma^{\uns}_c = n - s$, and for all $c$, $\gamma^{\sel}_c + \gamma^{\uns}_c \ge \floor{x^\sel_c + x^\uns_c}$.
             \For{all $c$}
                \State Let $D^\sel_c$ be a random sample of size $\gamma^\sel_c$ from $B^\sel_c$\label{lin:roundsamplesel}
                \State Let $D^\uns_c$ be a random sample of size $\gamma^\uns_c$ from $B^\uns_c$.\label{lin:roundsampleuns}
             \EndFor
             \State $B^\sel_c \gets \bigcup_c D^\sel_c$
             \State $B^\uns_c \gets \bigcup_c D^\uns_c$
             \State \Return $B^\sel_c, B^\uns_c$
        \EndFunction
        \Function{RoundAndSample}{\textsc{size}, $((S_1, x_1), \ldots (S_k), (x_k)$}
        \State Let $(y_1, \ldots, y_k) = \textsc{size} \cdot (\frac{x_1}{\sum_j x_j}, \ldots, \frac{x_k}{\sum_j x_j})$
        \State Round $(x_1, \ldots, x_k)$ to $(\gamma_1, \ldots, \gamma_k)$ such that $\E[\gamma_i] = x_i$ and $\floor{x_i} \le \gamma_i \le \ceil{x_i}$ and $\sum_i \gamma_i = \textsc{size}$. 
        \For{$i = 1, \ldots, k$}
            \State Let $D_i \subseteq B_i$ be a random sample of size $\gamma_i$\label{lin:round-sampling}
        \EndFor
        \State\Return $\bigcup_i D_i$
        \EndFunction

    \end{algorithmic}
\end{algorithm}

\section{Proof of \Cref{thm:topic}}\label{app:region}

    There are a variety of conditions we must check, namely, that \Cref{alg:region-topic} can successfully run to completion, it returns a valid assembly assignment, and that the returned assignment satisfies our ex ante and ex post guranteess

    Fix an arbitrary rounding $s_{r, t}$ for $r \in R$ and $t \in \topics$ from line~\ref{lin:srt-rounding}. We will take for now that these are fixed constants satisfying $\floor{n \cdot \frac{w_{r, t}}{|N_{r, t}|}} \le s_{r, t} \le \ceil{n \cdot \frac{w_{r, t}}{|N_{r, t}|}} $, only considering randomness from other sampling. In the end, we will deal with the randomness over these $s_{r, t}$ values as well to prove some ex ante guarantees.

    We will show for all $(r, t) \in R \times \topics$, the following properties hold:
    \begin{enumerate}
        \item Each $B^\sel_{r, t}, B^\uns_{r, t} \subseteq N_{(r, t)}$.\label{itm:subset}
        \item Each $|B^\sel_{r, t}| = s_{r, t}$, $|B^\uns_{r, t}| = n - s_{r, t}$. \label{itm:sizes}
        \item Each pair $B^\sel_{r, t}$ and $B^\uns_{r, t}$ are disjoint. \label{itm:pairdisjoint}
        \item All of $(B^\sel_{r, t})_{t \in \topics}$ are pairwise disjoint. \label{itm:alldisjoint}
        \item For $i \in N_{(r, t)}$, $\Pr[i \in B^\sel_{r, t}] = \frac{s_{r, t}}{|L_i| \cdot w_{r, t}}$ and $\Pr[i \in  B^\uns_{r, t}] = \frac{(n - s_{r, t})(1 - 1/|L_i|)}{|N_{r, t}| - w_{r, t}}$.\label{itm:probs}
    \end{enumerate}
    Fix a topic $t \in T$. We will show these properties for all $r$ by structural induction on the graph $R$. We begin by proving it is the case for leaf nodes $r \in \leaves(R)$. Fix such an $r$.

    For such an $r$, we simply need to analyze the $\textsc{SampleLeaves}$ function, showing that it can run successfully and produces an output satisfying the desired properties. 
    
    Properties~\ref{itm:subset}--~\ref{itm:alldisjoint} are immediate from the algorithm assuming it can run to completion. However, we must show that the random choices made on lines~\ref{lin:sample-D} and~\ref{lin:sample-E} are feasible, in the sense that the set we are selecting from has enough people. To that end, fix a set $T \subseteq \topics$, and consider the execution of line~\ref{lin:sample-D}. Note that this is successful as long as $\sum_t \alpha^T_t \le |C^T|$. Indeed, we have $\sum_t \alpha^T_t \le \ceil{\sum_{t \in T} p^T_t}$ by assumption of the rounding. Expanding this, for each $t$,
    \[p^T_t = \frac{|C^T|/|T|}{w_{r, t}} \cdot s_t \le \frac{|C^T|/|T|}{w_{r, t}} \cdot \ceil*{\frac{n \cdot w_{r, t}}{|N_{r, t}|}}.\]
    
    Let $\lambda = w_{r, t} / |N_{r, t}|$. We have that $|N_{r, t}| \ge 4n$, so this is at most 
    \[
        \frac{|C^T|/|T|}{4 \lambda n} \cdot \ceil*{\lambda n}.
    \]
    Now, since $\lambda \ge \varepsilon$, $\lambda n \ge \frac{2 \lambda}{\varepsilon \delta} \ge 1$. Thus $\frac{\ceil*{\lambda n}}{2\lambda n} \le 1$, and we have $p^T_t \le |C^T| / |T|$. Summing up over $t$, we get that $\sum_{t \in T} p^T_t \le |C^T|$. Thus, $|C^T| \ge \ceil{|C^T|} = \ceil{\sum_{t \in T} p^T_t}$ since $|C^T|$ is an integer.

    Next, fix $r$, $t \in \topics$, and $T$ with $t \in T$, and consider the run of line~\ref{lin:sample-E} with these parameters. Note that this sample will be successful as long as $|C^T| \ge \alpha^T_t + \beta^T_t$. We additionally have
    $\alpha^T \le \ceil{p^T_t}$ and $\beta^T_t \le \ceil{q^T_t}$. Therefore, we have
    \begin{align*}
         &\phantom{{}={}}\alpha^T_t + \beta^T_t\\
         &\le \ceil{p^T_t + q^T_t} + 1\\
         &\le \ceil*{\frac{|C^T|/|T|}{w_{r, t}}\ceil*{\frac{n \cdot w_{r, t}}{|N_{r, t}|}} + \frac{|C^T|(1 - 1/|T|)}{|N_{r, t}| - w_{r, t}}\ceil*{\frac{n \cdot (|N_{r, t}| - w_{r, t})}{|N_{r, t}|}}} + 1\\
         &= 
        \Bigg\lceil|C^T|\Bigg((1/|T|) \cdot \left( \frac{1}{w_{r, t}} \cdot \ceil*{\frac{n \cdot w_{r, t}}{|N_{r, t}|}}\right)\\
        &\qquad\qquad\qquad+ (1 - 1/|T|) \left(\frac{1}{|N_{r,t}| - w_{r, t}} \right)\ceil*{\frac{n \cdot (|N_{r, t}| - w_{r, t})}{|N_{r, t}|}} \Bigg)\Bigg\rceil + 1.
    \end{align*}
    Now, since $(1/|T|) + (1 - 1/|T|) = 1$, this is a convex combination between two terms. Thus, we may upper bound it by the maxmium of the two as
    \[
        \le \ceil*{|C^T| \max\left( \frac{1}{w_{r, t}} \cdot \ceil*{\frac{n \cdot w_{r, t}}{|N_{r, t}|}},  \left(\frac{1}{|N_{r,t}| - w_{r, t}} \right)\ceil*{\frac{n \cdot (|N_{r, t}| - w_{r, t})}{|N_{r, t}|}} \right)} + 1.
    \]
    Again, let $\lambda = w_{r, t} / |N_{r, t}| \le w_{r, t}/4n$, we can expand
    \[
        \frac{1}{w_{r, t}} \cdot \ceil*{\frac{n \cdot w_{r, t}}{|N_{r, t}|}} \le \frac{\ceil{\lambda n}}{4n \lambda}.
    \]
    Since $n \lambda \ge \varepsilon \cdot \frac{2}{\varepsilon \delta} \ge 2$, we have that $\frac{\ceil{\lambda n}}{4n \lambda} \le 1/2$. The same argument applies for the second term swapping $\lambda$ with $1 - \lambda$, since we also know $1 - \lambda \ge \varepsilon$. Thus, since we have also assumed $|C^T| \ge 2$, this simplifies to $\ceil{|C^T|/2} + 1 \le |C^T|$, as needed.

    Finally,  we handle property~\ref{itm:probs}. Fix $i \in N_{(r, t)}$ with $L_i = \set{r} \times T$, and note that $t \in T$. Now, by symmetry, each member of $C^T$ is equally likely to be in each of $B^\sel_{r, t}$ and  $B^\uns_{r, t}$. Thus, the probability is simply the expected number of seats going to $C^T$ divided by $|C^T|$. This is precisely $p^T_t / |C^T|$ and $q^T_t / |C^T|$, which expand to our desired values.

    Next, we proceed to the inductive step. Fix an internal node $r \in \feds(R)$ and topic $t \in \topics$. Suppose all of the properties hold $B^\sel_{c, t}$ and $B^\uns_{c, t}$ for $c \in \children(r)$. To show they hold for this $r$ and $t$, we must consider the \textsc{SampleFromChildren} step. We will show it successfully runs, producing $B^\sel_{r, t}$ and $B^\uns_{r, t}$ satisfying all of the properties.

    The two places \textsc{SampleFromChildren} may fail are on lines~\ref{lin:roundsamplesel} and~\ref{lin:roundsampleuns} if we wish to sample more people than are available.
    Line~\ref{lin:roundsamplesel} can run successfully as long as $\gamma^{\sel}_c \le |B^\sel_{c, t}| = s_{c, t}$ for each $c \in \children(r)$. 
    Fix such a child $c$. Note that by the rounding, we have that  $\gamma^{\sel}_c \le \ceil{s \cdot \frac{w_{c, t}}{\sum_{c'} w_{c', t}}}$
    Note that the denominator $\sum_{c'} w_{c', t} = w_{r, t}$ since children are disjoint. Furthermore, we have that
    $s_{r, t} \le \ceil{n \cdot \frac{w_{r, t}}{|N_{r, t}|}}$ and $s_{r, t} \ge \floor{n \cdot \frac{w_{c, t}}{|N_{c, t}|}}$. Therefore, it is sufficient to show
    \[
        \ceil*{\ceil*{n \cdot \frac{w_{r, t}}{|N_{r, t}|}}\cdot \frac{w_{c, t}}{w_{r, t}}} \le \floor*{n \cdot \frac{w_{c, t}}{|N_{c, t}|}}.
    \]
    This is implied by
    \[
        \ceil*{n \cdot \frac{w_{r, t}}{|N_{r, t}|}}\cdot \frac{w_{c, t}}{w_{r, t}} + 1 \le n \cdot \frac{w_{c, t}}{|N_{c, t}|},
    \]
    which itself is implied by
    \[
       (n \cdot \frac{w_{r, t}}{|N_{r, t}|} + 1)\cdot \frac{w_{c, t}}{w_{r, t}} + 1 \le n \cdot \frac{w_{c, t}}{|N_{c, t}|}.
    \]
    Finally, note that 
    \[
        \left(n \cdot \frac{w_{r, t}}{|N_{r, t}|} + 1\right)\cdot \frac{w_{c, t}}{w_{r, t}} + 1  \le n \cdot \frac{w_{r, t}}{|N_{r, t}|} + 2,
    \]
    so this entire inequality follows from
    \[
        n \cdot \frac{w_{r, t}}{|N_{c, t}|} - n \cdot \frac{w_{r, t}}{|N_{r, t}|} \ge 2.
    \] 
    Now, we have that
    \[
    n \cdot \frac{w_{r, t}}{|N_{c, t}|} - n \cdot \frac{w_{r, t}}{|N_{r, t}|} = n \cdot \frac{w_{r, t}}{|N_{c, t}|} (1 - \frac{|N_{c, t}|}{|N_{r, t}|}) \ge n \cdot \varepsilon \cdot \delta \ge 2,
    \]
    by assumption on $n$.

    A similar argument works for line~\ref{lin:roundsampleuns}. Fix $c \in \children(r)$. We need to show that
    \[
        \ceil*{(n - s_{r, t}) \cdot \frac{|N_{c, t}| - w_{c, t}}{\sum_{c'} |N_{c', t}| - w_{c', t}}} \le n - s_{c, t}
    \]
    for all $c \in \children(r)$. Expanding definitions, this is implied by
    \[
        \ceil*{\ceil*{n\left(\frac{|N_{r, t}| - w_{r, t}}{|N_{r, t}|} \right)} \frac{|N_{c, t}| - w_{c, t}}{|N_{r, t}| - w_{r, t}} } \le \floor*{n \cdot \frac{|N_{c, t}| - w_{c, t}}{|N_{c, t}|}}.
    \]
    Doing the same expansion on the ceilings, this inequality is implied by
    \[
        n \cdot \frac{|N_{c, t}| - w_{c, t}}{|N_{r, t}|}  + 2 \le n \cdot \frac{|N_{c, t}| - w_{c, t}}{|N_{c, t}|}.
    \]
    Finally, we have that
    \[
        n \cdot \frac{|N_{c, t}| - w_{c, t}}{|N_{c, t}|} - n \cdot \frac{|N_{c, t}| - w_{c, t}}{|N_{r, t}|} = n \cdot \frac{|N_{c, t}| - w_{c, t}}{|N_{c, t}|} \cdot \left(1 - \frac{|N_{c, t}|}{|N_{r, t}}\right) \ge n \cdot \varepsilon \cdot \delta \ge 2.
    \]

    Now that we have proved the function runs successfully, we prove the properties. We have that $B^\sel_{r, t} \subseteq \bigcup_{c \in \children(r)} B^\sel_{c, t} \subseteq \bigcup_{c \in \children(r)} N_{(c, t)} = N_{(r, t)}$, and a symmetric argument holds for $B^\sel_{r, t}$. The size holds because the sets $B^\sel_{c, t}$ are pairwise disjoint, so we never select the same person twice, and end up with $s_{r, t}$ distinct people. Again, a symmetric argument works for $B^\uns_{r, t}$. The disjointness holds trivially because it held for the children sets, along with distinct child regions having disjoint populations. Finally, fix $i \in N_{(r, t)}$, and suppose $c \in \children(r)$ is the unique child such that $i \in N_c$. Then the only way $i$ can be in $B^\sel_{(r, t)}$ is if $i$ was selected to $B^\sel_{(c, t)}$, and is then subsequently in the subset selected to join $B^\sel_{(r, t)}$. The probability of being in $B^\sel_{(c, t)}$ is $\frac{s_{c, t}}{|L_i| \cdot w_{c, t}}$ by induction. Furthermore, we sample $\gamma^\sel_c$ with $\mathbb{E}[\gamma^\sel_c] = \frac{s_{r, t}}{w_{c, t}}{w_{r, t}}$ and selected $\gamma^\sel_c$ out of the $s_{c, t}$ people uniformly at random from $B^\sel_{(c, t)}$. Thus conditioned on $i \in B^\sel_{(c, t)}$, the probability $i$ is in $B^\sel_{(s, t)}$ is $\frac{s_{r, t}}{w_{c, t}}{w_{r, t}}$. Hence, the overall probability $i \in B^\sel_{(s, t)}$ is
    \[
       \frac{s_{c, t}}{|L_i| \cdot w_{c, t}} \cdot \frac{\frac{s_{r, t}}{w_{c, t}}{w_{r, t}}}{s_{c, t}} = \frac{s_{r, t}}{|L_i| \cdot w_{r, t}},
    \]
    as needed. A symmetric argument holds for $B^\uns_{(r, t)}$.

    Finally, there is one more place where the algorithm can potentially fail, which is in the \textsc{RoundAndSample} call on line~\ref{lin:round-sampling}. To ensure that this line can run, we need to make sure that for each region $r$ and topic $t$, $|B^\sel_{r, t}| = s_{r, t} \le\ceil{n \cdot  \frac{w_{r, t}}{|N_{(r, t)}|}}$ is sufficiently large to be able to sample from. Note that we will take at most $\ceil{n \cdot \frac{w_{r, t}}{\sum_{t'} w_{r, t'}}}$ people. Furthermore, this denominator is $|N_r|$. So, we simply need to show
    \[
        \ceil*{n \cdot \frac{w_{r, t}}{|N_r|}} \le \floor*{n \cdot \frac{w_{r, t}}{|N_{r, t}|}}.
    \]
    Note that since $\floor*{n \cdot \frac{w_{r, t}}{|N_{r, t}|}}$ is an integer, this is implied by
    \[
        n \cdot \frac{w_{r, t}}{|N_r|} \le \floor*{n \cdot \frac{w_{r, t}}{|N_{r, t}|}}.
    \]
    Again, let $\lambda = \frac{w_{r, t}}{|N_{r, t}|}$. We then have
    \begin{align*}
        n \cdot \frac{w_{r, t}}{|N_r|}
        &= n \cdot \lambda \cdot \frac{|N_{r, t}|}{|N_r|}\\
        &\le n \cdot \lambda \cdot (1 - \delta)\\
        &= n \lambda - n\lambda \delta.
    \end{align*}
    Now, since $\lambda \ge \varepsilon$, $n\lambda\delta \ge 2$. Therefore, this is at most \[n\lambda - 2 \le \floor{n\lambda} = \floor*{n \cdot \frac{w_{r, t}}{|N_{r, t}|}},\]
    as needed.

    We have now shown that \Cref{alg:region-topic} can execute to completion. In what we remains we show all of the properties that it satisfies. First, note that each assembly in the final assignment is of size $n$ because it is composed of a disjoint union of sets whose sizes add up to $n$. 

    Next, we show that approximate ex post child representation holds. Fix a region $r$. We first consider $(r, *)$. Note that for each $t$, we select at least $\floor{n \cdot \frac{w_{r, t}}{|N_r|}}$ members from $B^\sel_{(r, t)} \subseteq A_{(r, t)}$. Thus, $A_{(r, t)} \cap A_{(r, *)} \ge \floor{n \cdot \frac{w_{r, t}}{|N_r|}}$, the exact child representation guarantees. Now fix an internal region $r \in \feds(R)$, a child $c \in \children(R)$ and a topic $t \in T$. We show that $|A_{(c, t)} \cap A_{(r, t)}| \ge \floor{n \cdot \frac{N_{(c, t)}}{N_{(r, t)}}} - 1$. Indeed, $|A_{(c, t)} \cap A_{(r, t)}| = |B^\sel_{c, t} \cap B^\sel_{(r, t)}| + |B^\uns_{c, t} \cap B^\uns_{(r, t)}|$. These two sizes are randomly selected on line~\ref{lin:massiverounding} of \textsc{SelectFromChildren}, and their sum will be $\gamma^\sel_c + \gamma^\uns_c$. There is a constraint that \[
    \gamma^\sel_c + \gamma^\uns_c \ge \floor{x^\sel_c + x^\uns_c} = \floor*{s_{r, t} \cdot \frac{w_{c, t}}{w_{r, t} } + (n - s_{r, t}) \cdot \frac{|N_{(c, t)}| - w_{c, t}}{|N_{(r, t)}| - w_{r, t}}}. 
    \]
    We will show that
    \begin{equation}
        s_{r, t} \cdot \frac{w_{c, t}}{w_{r, t} } + (n - s_{r, t}) \cdot \frac{|N_{(c, t)}| - w_{c, t}}{|N_{(r, t)}| - w_{r, t}} \le \frac{|N_{(c, t)}|}{|N_{(r, t)}|} - 1.
    \end{equation}
    which implies the ex post guarantees.
    Note that if $s_{(r, t)}$ were equal to its expectation $n \cdot \frac{w_{r, t}}{|N_{(r, t)}|}$, then the sum simplifies to exactly $\frac{|N_{(c, t)}|}{|N_{(r, t)}|}$. Note that by rounding $s_{(r, t)}$ to a neighboring integer, one of $s_{(r, t)}$ and $(n - s_{r, t})$ is larger than its expectation, and the other is smaller. However, this different is at most one. Furthermore, they are multipled by either $\frac{w_{c, t}}{w_{r, t} } $ or $\frac{|N_{(c, t)}| - w_{c, t}}{|N_{(r, t)}| - w_{r, t}}$, terms which are at most $1$. Hence, even after this rounding, the sum inside can differ from the expectation by at most one, and we get the lower bound of $\frac{|N_{(c, t)}|}{|N_{(r, t)}|} - 1$, as needed.

    Next, we consider ex ante child representation. For a node $(r, *)$, this is straightforward, as we directly round to get in expectation $n \cdot \frac{w_{(r, t)}}{|N_r|}$ selected from $B^\sel_{r, t} \subseteq A_{(r, t)}$. For an internal $r \in \feds(R)$, child $c \in \children(r)$, and topic $t \in \topics$, the overlap of $|A_{(r, t)} \cap A_{(c, t)}|$ is precisely the number of people selected from $B^\sel_{c, t}$ for $B^\sel_{r, t}$ plus the number of people selected from $B^\uns_{c, t}$ for $B^\uns_{r, t}$, as these sets are disjoint. The expected number of each is $s_{(r, t)} \cdot \frac{w_{c, t}}{w_{r, t}}$ and $(n - s_{(r, t)}) \cdot \frac{|N_{(c, t)}| - w_{c, t}}{|N_{(r, t)}| - w_{r, t}}$.  If we take expectation over $s_{(r, t)}$ as well, by linearity, we have that the expected overlap is
    \[
        n \cdot \frac{w_{r, t}}{|N_{(r, t)}|} \cdot \frac{w_{c, t}}{w_{r, t}} + \left(n - n \cdot \frac{w_{r, t}}{|N_{(r, t)}|}\right) \cdot \frac{|N_{(c, t)}| - w_{c, t}}{|N_{(r, t)}| - w_{r, t}} = n \cdot \frac{|N_{(c, t)}|}{|N_{(r, t)}|},
    \]
    as needed.

    Finally, we show individual representation. First, fix a node $(r, *)$ and a person $i \in N_r$ signed up for leaf nodes $\set{r'} \times T$. For each $t \in T$, we have that $i \in B^\sel_{(r, t)}$ with probability $\frac{s_{(r, t)}}{|T| \cdot w_{(r, t)}}$. In expectation, $n \cdot \frac{w_{(r, t)}}{|N_r|}$ will be selected from $B^\sel_{(r, t)}$ to be in $A_{(r, *)}$. hence, conditioned on $i \in  B^\sel_{(r, t)}$, $i$ will be selected with probability $\frac{n \cdot \frac{w_{(r, t)}}{|N_r|}}{s_{r, t}}$. This means the total probability of $i$ being in both $ B^\sel_{(r, t)}$ and $A_{(r, *)}$ is
    $
        \frac{n}{|T| \cdot |N_r|}.
    $
    Note that $i$ being in each of $B^\sel_{(r, t)}$ cannot happen simultaneously, because these sets are disjoint. Therefore, the total probability $i$ is in $A_{(r, *)}$ is the sum of the probabilities of all of these events, and therefore $\frac{n}{|N_r|}$. 

    Finally, consider a node $(r, t)$, and a member $i \in N_{(r, t)}$. We have that the probability of $i \in B^\sel_{(r, t)}$ is $\frac{s_{r, t}}{|L_i| \cdot w_{r, t}}$ and the probability of $i \in B^\uns_{(r, t)}$ is $\frac{(n - s_{r, t})(1 - 1/|L_i|)}{|N_{(r, t)}| - w_{r, t}}$.  Note that these are mutually exclusive events because the sets are disjoint. Hence, the total probability of $i \in A_{(r, t)}$ is the sum
    \[
        \frac{s_{r, t}}{|L_i| \cdot w_{r, t}} + \frac{(n - s_{r, t})(1 - 1/|L_i|)}{|N_{(r, t)}| - w_{r, t}}.
    \]
    By lienarity of expectation over the sampling of $s_{r, t}$, we have that the complete probability is
    \[
        \frac{n \cdot \frac{w_{r, t}}{|N_{(r, t)}|}}{|L_i| \cdot w_{r, t}} + \frac{(n - n \cdot \frac{w_{r, t}}{|N_{(r, t)}|})(1 - 1/|L_i|)}{|N_{(r, t)}| - w_{r, t}} = \frac{n}{|L_i|}{|N_{(r, t)}|} + \frac{n \cdot (1 - 1/|L_i|)}{|N_{(r, t)}|} = \frac{n}{|N_{(r, t)}|},
    \]
    showing individual representation. \qed

\end{document}